\begin{document}

\title{On the distributed evaluation\\ of recursive queries over graphs}

\author{St\'{e}phane Grumbach\thanks{INRIA-LIAMA, CASIA, PO Box 2728, Beijing 100190, PR China. Stephane.Grumbach@inria.fr}
\and Fang Wang\thanks{Lab of Computer Science, Institute of Software, Chinese Academy of Sciences, Beijing 100190. wangf@ios.ac.cn}\thanks{China Graduate School, Chinese Academy of Sciences, Beijing 100049, China}
\and Zhilin Wu\thanks{LIAMA, CASIA, PO Box 2728, Beijing 100190, PR China. zlwu@liama.ia.ac.cn}}

\newtheorem{definition}{Definition}
\newtheorem{theorem}{Theorem}
\newtheorem{proposition}{Proposition}
\newtheorem{corollary}[theorem]{Corollary}
\newtheorem{example}{Example}
\newtheorem{question}{Open Question}
\newtheorem{lemma}[theorem]{Lemma}
\newtheorem{Algo}{Algorithm}
\newtheorem{remark}{Remark}


\newcommand{\vs}{\vspace*{1cm}}
\newcommand{\vsfive}{\vspace*{5mm}}
\newcommand{\hsmthree}{\hspace*{-3mm}}
\newcommand{\hs}{\hspace*{1cm}}
\newcommand{\ctwo}{\cancel{\hspace*{2cm}}}

\newcommand{\lfo}{$FO_{loc}$}
\newcommand{\lfp}{$FP_{loc}$}

\date{}

\maketitle

\begin{abstract}
Logical formalisms such as first-order logic (FO) and fixpoint logic (FP) are
well suited to express in a declarative manner fundamental graph
functionalities required in distributed systems. We show that these
logics constitute good abstractions for programming distributed
systems as a whole, since they can be evaluated in a fully
distributed manner with reasonable complexity upper-bounds. We first
prove that FO and FP can be evaluated  with a polynomial number of
messages of logarithmic size. We then show that the (global) logical
formulas can be translated into rule programs describing the local
behavior of the nodes of the distributed system, which compute
equivalent results. Finally, we introduce local fragments of these
logics, which preserve as much as possible the locality of their
distributed computation, while offering a rich expressive power for
networking functionalities. We prove that they admit tighter
upper-bounds with bounded number of messages of bounded size.
Finally, we show that the semantics and the complexity of the local
fragments are preserved over locally consistent networks as well as
anonymous networks, thus showing the robustness of the proposed
local logical formalisms.

\end{abstract}

\section{Introduction}

Logical formalisms have been widely used in different fields of
computer science to provide high-level programming abstractions. The
relational calculus used by Codd to describe data-centric
applications in an abstract way, is at the origin of the
technological and commercial success of relational database
management systems \cite{RG95}. Datalog, an extension of Horn clause
logic  with fixpoints, has been widely used to specify
functionalities involving recursion \cite{RD95}.


The development of distributed applications over networks of devices
is generally a very tedious task, involving handling low level
system details. The lack of high-level programming abstraction has
been identified as one of the roadblocks for the deployment of
networks of cooperating objects \cite{WiSeNts06}.

Recently, the use of queries to define network applications has been
considered. Initially, the idea emerged in the field of sensor
networks. It was suggested to see the network as a database, and
interact with it through declarative queries. Several systems have
been developed, among which  Cougar \cite{FungSG02} and TinyDB
\cite{MaddenFHH05}, supporting SQL dialects.
Queries are processed in a centralized manner, leading to distributed
execution plans.

More recently, query languages were proposed as a mean to express
communication network problems such as routing protocols
\cite{LooHSR05} and declarative overlays \cite{LooCHMRS05}. This
approach, known as {\it declarative networking} is extremely
promising for it offers a high-level abstraction to program networks. It
was also shown how to use recursive queries to perform diagnosis of
asynchronous systems \cite{AbiteboulAHM05}, network monitoring
\cite{ReissH06}, as well as self-organization protocols
\cite{GrumbachLQ07}. Distributed query languages provide new
means to express complex network problems such as node discovery
\cite{AlonsoKSWW03}, route finding, path maintenance with quality of
service \cite{BejeranoBORS05}, topology discovery, including
physical topology \cite{BejeranoBGR03}, etc.


However, there is a lack of systematic theoretical investigations of
query languages in the distributed setting, in particular on their
semantics, as well as the complexity of their distributed
computation. In the present paper,
we consider a distributed evaluation of classical query languages,
namely, first-order logic and fixpoint logic, which preserves
their classical semantics.

First-order logic and
fixpoint logic have been extensively investigated in the context of
database theory \cite{AbiteboulHV95} as well as finite model theory
\cite{EbbinghausFlum99}. Since the seminal paper of Fagin
\cite{fagin74}, showing that the class NP corresponds exactly to
problems which can be expressed in existential second-order logic,
many results have linked Turing complexity classes with logical
formalisms. Parallel complexity has also been considered for
first-order queries which can be evaluated in constant time over
circuits with arbitrary fan-in gates \cite{Immerman89}.

This raised our curiosity on the distributed potential of these
classical query languages to express the functionalities
of communication networks, which have to be computed in a
distributed manner over the network itself. If their computation can
be distributed efficiently, they can form the basis of a high level
abstraction for programming distributed systems as a whole.

We rely on the classical message passing
model \cite{AttiyaW04}. Nodes exchange messages with their neighbors
in the network. We consider four measures of complexity: (i) the
in-node computational complexity, rarely addressed in distributed
computing; (ii) the distributed time complexity; (iii) the message
size; and (iv) the per-node message complexity.
The behavior of the nodes is governed by an algorithm, the
distributed query engine, which is installed on each node, and evaluates
the queries by alternating local computation and exchange of queries and results
with the other nodes.

We first consider the distributed complexity of first-order logic and fixpoint logic with
inflationary semantics, which accumulates all the results of the different stages of the computation. Note that our result carry over for other formalisms such as least fixpoint.
We prove that the distributed complexity of first-order queries is in
$O(\log n)$ in-node time, $O(\Delta)$ distributed time ($\Delta$ is
the diameter of the network), messages of size $O(\log n)$, and a polynomial number
of messages per node. For fixpoint, a similar bound can be shown but
with a polynomial distributed time.

We then consider the translation of logical formulae that express
properties of graphs at a global level, into rule programs that
express the behavior of nodes at a local level, and compute the same
result. We introduce a rule language, $Netlog$, which extends Datalog,
with communication primitives, and is well suited to express
distributed applications, ranging from networking protocols to
distributed data management. $Netlog$ is supported by the Netquest
system, on which the examples of this paper have been implemented.
We prove that graph programs in Datalog$^\neg$
\cite{AbiteboulHV95} can be translated to $Netlog$ programs. Since it
is well known that first-order and fixpoint logics can be translated
in Datalog$^\neg$ \cite{EbbinghausFlum99}, it follows that global logical formulae
can be translated in behavioral programs in Netlog producing the same result.

Finally, we define local fragments of first-order and fixpoint logic,
respectively \lfo\/ and \lfp. These fragments provide a good
compromise in the trade-off between expressive power and efficiency
of the distributed evaluation. Important network functionalities
(e.g. spanning tree, on-demand routes etc.) can be defined easily in
\lfp. Meanwhile, its complexity is constant for all our measures,
but the distributed time which is linear in the diameter for \lfo\/
and in the size of the network for \lfp.


Our results shed light on the complexity of the distributed
evaluation of queries. Note that if the communication network is a
clique (unbounded degree), our machinery resembles Boolean circuits,
and we get constant distributed time, a result which resembles the
classical AC$^0$ bound \cite{Immerman89}.

We have restricted our
attention to bounded degree graphs and synchronous systems. Most of our algorithms carry over, or can be extended to
unrestricted graphs, and asynchronous computation, but not necessarily the complexity bounds.
Interestingly, the results for the local fragments carry over for
other classes of networks, such as locally consistent networks or
anonymous networks, thus showing the robustness of the languages
\lfo\/ and \lfp.

The paper is organized as follows. In the next section, we recall
the basics of first-order and fixpoint logics.
In Section~\ref{sec-machine}, the computation model
is presented. Section~\ref{sec-QE-FO} is devoted to distributed first-order
query execution, and Section~\ref{sec-QE-FP} to fixpoint query
execution. In Section~\ref{sec-netquest}, we introduce a behavioral language, Netlog, and show that FP formulae can be translated into equivalent Netlog programs.
In Section~\ref{sec-locfrag}, we consider the restriction to
the local fragments, and show that they can be evaluated over different types
of networks.


\section{Graph logics}\label{sec-ql}

We are interested in functions on graphs that represent the topology
of communication networks. We thus restrict our attention to {\it finite
connected bounded-degree undirected graphs}. Let $D$ be the bound on
the degree.

We assume the existence of an infinite ordered set of constants,
$U$, the {\it universe} of node Id's. A {\it graph}, ${\bf G}=(V,G)$, is
defined by a finite set of nodes $V\subset U$, and a set of edges
$G\subseteq V\times V$.

We express the functions on graphs as queries. A {\it query} of
arity $\ell$ is a computable mapping from finite graphs to finite
relations of arity $\ell$ over the domain of the input graph closed
under graph isomorphisms. A {\it Boolean query} is a query with Boolean output.

Logical languages have been widely used to define queries. A formula
$\varphi$ over signature  $G$ with $\ell$ free variables defines a
query mapping instances of finite graphs ${\bf G}$ to relations of
arity $\ell$ defined by: $A=\{(x_{1},\dots, x_{\ell})|{\bf G}\models\varphi(x_{1},\dots,x_{\ell})\}$.
We equivalently write ${\bf G},A\models\varphi$.

We denote by FO the set of queries definable using first-order
formulae. First-order queries  can be used in particular to check
locally forbidden configurations for instance. Their expressive
power is rather limited though.

Fixpoint logics on the other hand
allow to express fundamental network functionalities, such as those
involving paths.
If $\varphi(T;x_1,...,x_\ell)$ is a first-order formula with $\ell$
free variables over signature $\{G,T\}$, where $T$ is a new relation
symbol of arity $\ell$, called the {\it fixpoint relation}, then
$\mu(\varphi(T))$ denotes a fixpoint formula whose semantics is
defined inductively as the inflationary fixpoint $I$, of the
sequence:
\begin{eqnarray*}
I_{0}&=&\emptyset;\nonumber\\
\label{ind-fp} I_{i+1}&=&\varphi(I_i)\cup I_i, i\geq 0
\end{eqnarray*}
where $\varphi(I_i)$ denotes the result of the evaluation of
$\varphi(T)$ with $T$ interpreted by $I_i$. The $I_i$'s constitute
the \textit{stages} of the computation of the fixpoint. We write
$G,I \models \mu(\varphi(T))$, whenever $I$ is the fixpoint of the
formula $\varphi(T)$ as defined by the above induction.

It is well know \cite{AbiteboulHV95}  that on ordered domains, the class of graph
queries defined by inflationary fixpoint, denoted FP, captures
exactly all Ptime mappings, that is mappings that can be computed on
a Turing machine in time polynomial in the size of the graph.

The following examples illustrate the expressive power of FP for distributed applications.

The formula $\mu(\varphi(T)(x,h,d))$ for instance where the formula
$\varphi(T)(x,h,d)$ is defined by:
%
\begin{eqnarray*}\label{OLSR}
(G(x,h)\wedge h=d) \vee (G(x,h) \wedge \exists z (T(h,z,d)\wedge x\neq z)
\wedge
\neg\exists u T(x,u,d))
\end{eqnarray*}
defines a table-based routing protocol (OLSR like) on the graph $G$,
where $h$ is the next hop from $x$ to destination $d$.

A spanning tree from a node $x$ satisfying $ReqNode(x)$
can be defined by a fixpoint
formula $\mu(\varphi(ST)(x,y))$, where the formula $\varphi(ST)(x,y)$ is defined by:

\noindent
$(G(x,y)\wedge ReqNode(x))  \vee $
 \begin{eqnarray*}
 \hspace{1cm} (\neg \exists x' ST(x',y) \wedge \exists w (ST(w,x)\wedge w\neq y)\wedge G(x,y) \wedge
    \forall w^\prime \forall x' (ST(w',x')\wedge G(x',y) \Rightarrow x'\geq x))
\end{eqnarray*}

Similarly, an On-Demand Routing protocol (AODV like), can be
defined by the fixpoint queries $\mu(\varphi(RouteReq)(x,y,d))$ and
$\mu(\psi(NextHop)(x,y,d))$, where $d$ is a constant and $\varphi(RouteReq)(x,y,d)$ is
defined by:

\noindent
$(G(x,y)\wedge ReqNode(x) \wedge dest(d)) \vee$
\begin{eqnarray*}
\left(\exists w (RouteReq(w,x,d)\wedge w\neq y)\wedge G(x,y)\wedge \right.
    \left.x \ne d \wedge \neg \exists w^\prime
    RouteReq(w^\prime,y,d)\right)
\end{eqnarray*}

\noindent
and $\psi(NextHop)(x,y,d)$ is defined by:

\noindent
\begin{eqnarray*}
(RouteReq(x,d,d)\wedge y=d) \vee\left(\exists z NextHop(y,z,d)\wedge RouteReq(x,y,d)\right)
\end{eqnarray*}
where a route request is first emitted by a node $x$ satisfying $ReqNode(x)$,
then a path defined by next hops from that node to destination $d$ is established
by backward computation on the route request.


\section{Distributed evaluation}\label{sec-machine}

We are interested in this paper in the distributed evaluation of queries.
We assume that each query to the network is posed by a {\it requesting node}
(the node satisfying the predicate $ReqNode(x)$ in the examples of the previous section).

The result of a query shall be distributed over all the nodes of the network.
In a query $Q(x_1,x_2,\cdots,x_\ell)$, one of the attributes $x_i$
denotes the {\it holding node}, written explicitly as $@x_i$,
that is the node which holds the results relative to $x_i$.
More precisely, the tuple
$\langle a_1,\cdots,a_{i-1},a,a_{i+1},\cdots,a_\ell \rangle$ is held by node $a$, such that
$Q(a_1,\cdots,a_{i-1},a,a_{i+1},\cdots,a_\ell)$ holds.
For simplicity, we will choose the first variable as holding attribute.

The results of fixpoint queries are thus distributed on holding
nodes. In the OLSR like example of the previous section,
each node shall hold its routing table as a result of the evaluation of the query.

The nodes of the network are equipped with a distributed query
engine to evaluate queries. It is a universal algorithm that
performs the distributed evaluation of any network functionality
expressed using queries. The computation relies on the message
passing model for distributed computing \cite{AttiyaW04}.

The {\it configuration} of a node is given by a state, an
in-buffer for incoming messages, an out-buffer for outgoing
messages, and some local data and metadata used for the computation.
We assume that the metadata on each node contain a unique
identifier, the upper bound on the size of the network, $n$, and the
diameter of the network, $\Delta$. We also assume that the local
data of each node includes all its neighbors with their identifiers.

We distinguish between {\it computation events}, performed in a
node, and {\it delivery events}, performed between nodes which
broadcast their messages to their neighbors. A sequence of computation
events followed by delivery events is called a {\it round} of the
distributed computation.

A {\it local execution} is a sequence of alternating
configurations and events occurring on one node. We assume that the
network is static, nodes are not moving, and that the communication
has no failure.

We assume that at the beginning of the computation of a query, all
the nodes are idle, in initial state, with their in-buffers, and
out-buffers empty. Note that, it is easy to extend the present
computational framework to a multithreaded computation with several
concurrent queries running in the network. The requesting node
broadcasts its query to its neighbors. The incoming messages in the
subsequent nodes trigger the start of their query engine computation.

The {\it evaluation} of a query {\it terminates} when the
out-buffers of all nodes are empty. The {\it result} is
distributed over the network in the memories of all nodes.
Note that alternative termination modes are also possible.

We consider four measures of the complexity of the distributed
computation:
\vspace*{-2mm}
\begin{itemize}
\item The {\it per-round in-node computational complexity},
IN-TIME/ROUND, is the maximal computational time of the in-node
computation in one round;

\item The {\it distributed time complexity}, DIST-TIME, is the maximum
number of rounds of any local execution of any node till the
termination;

\item The {\it message size}, MSG-SIZE, is the maximum number of bits
in messages;

\item The {\it per-node message complexity}, $\#$MSG/NODE, is
the maximum number of messages sent by any node till the termination
of the evaluation.
\end{itemize}
%
There is a trade-off between the  in-node computation
and the communication. Our objective is to distribute the workload
in the network as evenly as possible, with a balanced amount of
computation and communication on each node. Clearly, centralized
computation can be carried on  by loading the topology of the
network on the requesting node, and performing the evaluation by
in-node computation. The centralized evaluation of FO and FP admits the following
complexity bounds.


\begin{proposition}\label{the-central-comp-fo}
Let $G$ be a network of  diameter $\Delta$, with $n$ nodes. Let
$\varphi$ be a FO formula with $v$ variables. The complexity of the
centralized evaluation of the query $ \varphi$ on $G$ is given by:

\medskip
\begin{tabular}{cccc}
\hs IN-TIME/ROUND   &  DIST-TIME    & MSG-SIZE &  $\#$MSG/NODE \\
  $O(n^v \log n)$ & $O(\Delta)$ & $O(\log n)$ & $O(n)$
\end{tabular}

\noindent Suppose $\mu(\varphi(T)(x_1, \dots,x_\ell))$ is a FP
formula such that $T$ is a relational symbol of arity $\ell$,  and
it contains $v = \ell+k$ variables ($\ell$ free and $k$ bounded).
Then the complexity upper-bound of the centralized evaluation of the query
$\mu(\varphi(T)(x_1,\dots,x_{\ell}))$ on $G$ is the same as the
above complexity for FO formulae except for the \textit{IN-TIME/ROUND} which is in
$O(n^{\ell+v}\log n)$.
\end{proposition}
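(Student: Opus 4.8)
The plan is to establish the complexity bounds for the centralized evaluation by describing explicitly the algorithm that loads the topology onto the requesting node and then evaluates the formula there, and to account separately for the four complexity measures. First I would describe the \emph{topology collection phase}: the requesting node floods a collection request through the network, and each node reports its identifier together with its list of neighbors (which it knows, by our assumption on local data) back toward the requesting node along the edges of a BFS-like tree. Since the network has diameter $\Delta$, this phase takes $O(\Delta)$ rounds, establishing the \textit{DIST-TIME} bound. Each message carries a constant number of node identifiers, each of $O(\log n)$ bits since identifiers and $n$ fit in $O(\log n)$ bits, giving \textit{MSG-SIZE} $= O(\log n)$; a node may forward the data of up to $n$ nodes, but a careful accounting of the tree-convergecast shows each node sends $O(n)$ messages, giving $\#$\textit{MSG/NODE} $= O(n)$.

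Next I would address the \emph{in-node evaluation phase}, which dominates the \textit{IN-TIME/ROUND} measure since the only nontrivial local computation happens on the requesting node once it holds the whole graph $\mathbf{G}=(V,G)$. For a FO formula $\varphi$ with $v$ variables, the standard bottom-up evaluation of the formula (recursing on subformulae, with quantifiers ranging over the $\leq n$ elements of $V$) visits at most $|\varphi|$ subformulae and for each materializes a relation of arity $\leq v$, which has at most $n^v$ tuples; each tuple requires reading/comparing identifiers of size $O(\log n)$. This yields \textit{IN-TIME/ROUND} $= O(n^v \log n)$, absorbing the constant factor $|\varphi|$. For the FP formula $\mu(\varphi(T)(x_1,\dots,x_\ell))$, the argument is iterated: the inflationary sequence $I_0 \subseteq I_1 \subseteq \cdots$ stabilizes after at most $|I| \leq n^\ell$ stages (each stage adds at least one tuple to a relation of arity $\ell$), and each stage is one FO evaluation of $\varphi(T)$ with $T$ interpreted by the current $I_i$, costing $O(n^v \log n)$ where now $v = \ell + k$ counts all variables of $\varphi$. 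Multiplying the $n^\ell$ stages by the per-stage cost gives \textit{IN-TIME/ROUND} $= O(n^{\ell+v}\log n)$, as claimed; the other three measures are unchanged, since the distribution phase and the final result-dissemination phase (sending each holding node its tuples) are identical.

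Finally I would handle the \emph{result dissemination phase}: after computing the answer relation $A$ on the requesting node, the tuples must be routed to their holding nodes (the node named by the first coordinate). This is again a flooding/routing operation over $O(\Delta)$ rounds with $O(\log n)$-bit messages, and since the total number of tuples is at most $n^v$ while each of the $n$ nodes must forward at most its share, one must check that the per-node message count stays $O(n)$ --- this is where the bound is least obviously tight. The main obstacle, and the place to argue carefully, is precisely this bookkeeping: naively, shipping $n^v$ tuples of $O(\log n)$ bits through the network would seem to force either $\omega(n)$ messages per node or messages of size $\omega(\log n)$. The resolution is that we are only \emph{distributing} the result, so each holding node receives only the $O(n^{v-1})$ tuples relevant to it, and by pipelining the convergecast/divergecast over the BFS tree one shows the amortized per-node message complexity is $O(n)$ when $v$ is treated as a fixed constant (the formula is fixed, not part of the input); I would make this explicit and note that the hidden constants depend on $v$ and $|\varphi|$. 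Combinatorial care at this step is the crux; the rest is routine simulation of classical bottom-up query evaluation.
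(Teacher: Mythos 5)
Your first two phases are exactly the content the paper intends: the paper's own justification is a one-line appeal to classical data-complexity results (topology collection in $O(\Delta)$ rounds with $O(\log n)$-bit messages and $O(n)$ messages per node, since each node's neighbor list is of constant size by the degree bound, followed by standard in-node bottom-up evaluation costing $O(n^v\log n)$ for FO and $n^{\ell}$ inflationary stages times $O(n^v\log n)$ for FP). Your accounting of those two phases is correct and matches that approach, just spelled out in more detail.

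The problem is the third phase you add, and specifically the claim you flag as the crux. If the result relation (which has at most $n^{\ell}$ tuples, $\ell$ being the number of free variables --- not $n^{v}$, a minor slip) had to be shipped from the requesting node to the holding nodes in messages of $O(\log n)$ bits, then the requesting node alone would have to emit $\Omega(n^{\ell})$ messages, since each message can carry only $O(1)$ tuples; no pipelining or amortization over a BFS tree can reduce a single node's outgoing message count below the number of tuples it must originate, and nodes near the root must likewise forward all tuples destined to their subtrees. So the ``amortized $O(n)$ per node'' claim is false for $\ell\geq 2$, and if dissemination were part of the measured computation the stated $\#$MSG/NODE bound could not hold. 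The way out is that the proposition does not include dissemination: the centralized evaluation is precisely the scheme where the topology is loaded onto the requesting node and the evaluation is done there by in-node computation, with the answer residing at that node; distributing answers to holding nodes is the job of the distributed engines of Sections~4 and~5, whose (larger) message bounds reflect exactly this cost. Dropping your third phase, and correcting $n^v$ to $n^{\ell}$ for the answer size, leaves a correct proof along the paper's lines.
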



\noindent Note that all nodes, but the requesting node, have $O(\log
n)$ per-round in-node complexity. The proof of this result follows
from classical results on data complexity of query languages
\cite{AbiteboulHV95}. In the sequel, we focus exclusively on
distributed query evaluation.


\section{Distributed complexity of FO}\label{sec-QE-FO}

In this section we show that the distributed evaluation of FO can be done with a polynomial number of messages but logarithmic in-node computation per round. The result relies on a naive distributed query engine for FO, $\mathcal{QE}_{FO}$, which works as follows.

The requesting node starts the computation by submitting a query.
The nodes broadcast Boolean answers to queries when they have them,
and otherwise queries they cannot answer, to their neighbors. Each
node reduces queries by instantiating variables. In
$\mathcal{QE}_{FO}$, nodes start instantiating from the leftmost
quantified variable, and from the rightmost free variable. The last
instantiated free variable therefore denotes the holding node of the
query, on which the corresponding tuples will be stored. The nodes
simplify the queries by removing all facts, or subformulae they can
fully evaluate.

Let $\varphi$ be a first-order formula with $\ell$ free variables.
The query engine handles the following message types: message
$\{?B\varphi\}$ for Boolean queries, message $\{?x_{1} \dots ?x_{i}
!a_{i+1} \dots !a_{\ell} \varphi\}$ for non-Boolean queries,
and message $\{! B \varphi\}$ for answers of Boolean queries.

Each node stores pairs ($query$,
$parentquery$), in a query table, associating the query being evaluated to the query
from which it derives. Nodes also store the Boolean answers $! B \varphi$ and non-Boolean
answers $\langle a_{1}\dots a_{\ell}\rangle$ to queries in an answer table.

We will see that the diameter $\Delta$ of the graph induces an
upper-bound on the response time of queries. The algorithm uses
clocks that are defined according to this upper-bound.
\textbf{Clocks} are associated to the evaluation of queries as well
as subqueries. After the time of a clock  associated to a query on a
node has elapsed, the value of the query can be determined by the node.
From now on, we assume that we are given a clock compliant with the
communication graph. The value of the clocks will be defined in
Definition~\ref{def-clock} below.

The main steps of the query engine work as follows. Note that we
assume for simplicity in the sequel that the system is synchronous.
This assumption can be relaxed easily in asynchronous systems without impact on the complexity by using spanning trees rather than the clocks.

\textbf{Initial Boolean query emission} For a Boolean query, the
requesting node, say $a$, broadcasts the query, $?B\varphi$, adds
($?B\varphi,nil$) into the query table, and sets a \textbf{clock}
for the answer. Meanwhile it instantiates the leftmost bounded
variable and produces a subquery. For an existentially quantified
formula $\exists x\psi$, if $\psi(a)$ is true then it is a
witness that $\exists x\psi$ is true.  For a universally
quantified formula $\forall x\psi$, if $\psi(a)$ is false then
it is a counterevidence and $\forall x\psi$ is false. If the node
doesn't have the answer to $\psi(a)$, it inserts $\psi(a)$
along with its parent query into the query table,
\textit{e.g.}$(?B\psi(a),?B\exists x\psi)$, broadcasts
$\psi(a)$ and also sets a \textbf{clock} for $\psi(a)$.
If no witness / counterevidence is received before the clock elapses, then
$\exists x\psi$ is false /  $\forall x\psi$ is true.
It then recursively handles $\psi(a)$ in the same way.

\textbf{Boolean query reception} Every node upon reception of a
Boolean query, $?B\varphi$, checks at first its query
table. If there is a record for this query, it does nothing.
Otherwise its behavior is similar to the Boolean query emission of
the requesting node, with the difference that it also broadcasts the
answer.

\textbf{Boolean answer reception} Every node receiving an answer to
a Boolean query, $! B \varphi$, checks its answer table. If there is
a record, it does nothing. Otherwise, it stores the answer,
checks the query table. If it is waiting for the answer, it then tries
to evaluate the parent query (if it has one), stores and broadcasts
its answer if it has; if it is not waiting for the answer, it
broadcasts $! B \varphi$.

\textbf{Initial non-Boolean query emission} The requesting node
submits and broadcasts the query $?x_{1} \dots ?x_\ell \varphi(x_1,\dots,x_{\ell})$.
It sets the clock, inserts $(?x_{1} \dots
?x_{\ell}\varphi(x_1,\dots,x_{\ell}), nil)$ into the query table,
instantiates the rightmost free variable to get the subquery, which is
$?x_{1}\dots ?x_{\ell-1}!a\varphi(x_1,\dots,x_{\ell-1},a)$, and broadcasts it. Meanwhile the subquery is inserted into the query table and handled further by the requesting node.
When all the free variables are instantiated, the Boolean query
$?B\varphi(a_1\dots a_\ell)$ is emitted and a record $(?B\varphi(a_1\dots a_\ell),$ $!a_1\dots !a_\ell\varphi(a_1\dots a_\ell))$ is inserted
in the query table of node $a_1$.

\textbf{Non-Boolean query reception} Every node
checks its query table when it receives a
query
$?x_{1}\dots?x_{i-1}!a_{i}\dots!a_{\ell}\varphi(x_1,\dots,x_{i-1},a_i,\dots,a_{\ell})$. If there is a record in the table, it does
nothing. Otherwise, it stores
$(?x_{1}\dots?x_{i-1}!a_{i}\dots!a_{\ell}\varphi(x_1,\dots,x_{i-1},$ $a_i,\dots,a_{\ell}),
nil)$ in the query table, its behavior is then similar to the
initial non-Boolean query emission with $i-1$ free variables.

\textbf{Distributed tuple answer collection} If the Boolean query
$?B\varphi(a_1,\dots,a_\ell)$ receives a positive answer to it, and there is a record $(?B\varphi(a_1\dots a_\ell), !a_1\dots !a_\ell\varphi(a_1\dots a_\ell))$ in the query table, $\langle a_1,\dots, a_\ell\rangle$ is stored in the answer table of
the current node which corresponds to the instantiation of the
leftmost free variable, that is the holding node for the answer.

\smallskip

\noindent We now turn to the  \textbf{clocks} which parameterize the
first-order query engines. The following theorem provides an
upper-bound on the distributed time complexity of the evaluation of
a formula.


\begin{theorem}\label{the-clock}
For networks of diameter $\Delta$, the distributed time complexity
of the evaluation of a formula  with $w$ variables or constants by $\mathcal{QE}_{FO}$ is
bounded by $2\Delta w$.
\end{theorem}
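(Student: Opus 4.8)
The natural approach is structural induction on the formula $\varphi$, tracking the depth of recursion in the query engine's processing. The key observation is that each quantified variable and each free variable contributes at most $2\Delta$ to the total distributed time: roughly $\Delta$ rounds for a query to propagate outward through the network to all nodes that might contribute a witness or counterevidence (or an instantiation of a free variable), and another $\Delta$ rounds for the answers to flow back. Since the query engine instantiates variables one at a time — leftmost bounded variable first, rightmost free variable first — the processing of a formula with $w$ variables or constants decomposes into a chain of at most $w$ such instantiation rounds, each nested inside the previous one.

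First I would set up the induction invariant precisely: for a (sub)query obtained by instantiating some prefix of the variables, with $k$ remaining variables or constants to be processed, the clock associated to that query can be set to at most $2\Delta k$, and this suffices for the node to determine the query's value. The base case is a fully instantiated Boolean query $?B\varphi(a_1,\dots,a_\ell)$ with no remaining variables (only constants): this is essentially a conjunction/disjunction of ground facts $G(a_i,a_j)$, each of which a node can answer locally or learn from a neighbor, so the relevant atoms are resolved and propagated back within $2\Delta$ rounds at most (the constant factor being absorbed since each $G$-atom query travels out and its answer travels back). For the inductive step, consider $\exists x\,\psi$ (the case $\forall x\,\psi$ is dual, and Boolean connectives $\wedge,\vee,\neg$ do not increase the count since they are evaluated in place once their subqueries return): a node broadcasts the subquery $\psi(a)$, which by induction has clock $2\Delta(k-1)$; any witness discovered anywhere in the network is at distance $\le\Delta$, so it reaches the originating node within a further $\Delta$ rounds after $\psi(a)$ is resolved there, and the time to reach the relevant node is also $\le\Delta$ — giving the additional $2\Delta$. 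The non-Boolean case is handled the same way: instantiating the rightmost free variable to each candidate node and recursing drops the free-variable count by one at the cost of $2\Delta$ rounds (out to collect the instantiation, back with the tuple answer).

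The main obstacle I anticipate is the bookkeeping around \emph{simultaneity and sharing}: many subqueries with the same instantiated prefix are generated at many nodes, and one must argue that the query-table mechanism (a node that already has a record does nothing) does not create longer dependency chains than the naive bound suggests — i.e., that reusing an already-circulating query never forces a node to wait longer than $2\Delta$ extra rounds. The cleanest way to handle this is to observe that the clock values are fixed in advance as a function of the syntactic nesting depth (this is exactly what Definition~\ref{def-clock} will pin down), so a node waiting on a subquery waits a predetermined amount regardless of who else is asking; the correctness of that fixed value is then precisely the induction above. A secondary point to get right is the factor of $2$ rather than something larger: one must check that the outward propagation of a subquery and the inward return of its answer can be \emph{overlapped} with, rather than stacked on top of, the propagation/return at the enclosing level — but since the enclosing node only needs the subquery's value (available after $2\Delta(k-1)$ rounds) and then needs its own answer to travel home (a further $\le\Delta$), while the subquery itself only needed $\le\Delta$ to reach a contributing node, the arithmetic closes at $2\Delta k$ without slack to spare.
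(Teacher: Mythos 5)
Your proposal is correct and follows essentially the same route as the paper: an induction on the total number of variables and constants, charging each one at most $2\Delta$ rounds ($\Delta$ for outward propagation of the subquery, $\Delta$ for the witness/counterevidence or answer to return), with the clock values fixed in advance exactly as in Definition~\ref{def-clock}. The paper packages the argument as a base case $w=2$ plus an induction step adding one constant, bounded variable, or free variable, but the accounting is identical to yours.
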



\begin{proof}
The proof is done by induction on the
number of variables and constants in the query $\psi$.

\textsc{Basis:} Assume $w=2$. There are three possibilities: two
constants, or two variables, or one constant and one variable in the
query $\psi$.
\begin{itemize}
\item If there are two constants, say $a$ and $b$, the query
$\psi$ is propagated to $a$ and gets the value of the atom
$G(a,b)$ which takes at most $\Delta$ rounds. Then the answer of
$\psi$ is sent back to the requesting node which takes at most
$\Delta$ rounds. The total time is at most $2\Delta$ rounds.
\item If there are one variable $x$ and one constant $a$ in $\psi$, the
query is propagated to every node at which the variable is
instantiated and we get the answers of $G(x,a)$, which takes $\Delta$
rounds.
\begin{enumerate}
\item[-] When the variable is free, the answer is stored in the local table of $x$.
\item[-] When the variable is bounded, the witness/counter
evidence of $\psi$ is sent back to the requesting node which takes
at most $\Delta$ rounds. Or if after $\Delta$ rounds, the requesting
node does not receive any sub-answer, it is sound to consider that  there
are no witnesses or counterevidences.
\end{enumerate}
So the total time is $2\Delta$ rounds in both cases.
\item If there are two variables then it takes $\Delta$ rounds to instantiate one
variable at every node (suppose the formula obtained is $\eta$) and then $\Delta$ rounds for the other variable (suppose the formula obtained
is $\xi$). Therefore $2\Delta$ in all.
\begin{enumerate}
\item[-] If both of the variables are free variables, if $\xi$ is
true, then the tuple is stored in the local table.
\item[-] If the first variable is free and the second one is bounded, then it takes
$\Delta$ rounds for the witness/ counter evidence (if there is one)
to get to the first instantiating node from the second one, if
$\eta$ is true, suppose $a$ is the instantiation of the free
variable, then the answer is stored in the local table.
\item[-] If both variables are bounded, then it takes $\Delta$ rounds for the answer to get to the
first instantiating node and then $\Delta$ to the requesting node.
\end{enumerate}
So the total time is $4\Delta$ rounds.
\end{itemize}
Therefore, for  $w=2$, the time is bounded by $2\Delta w$ rounds.

\textsc{Induction:} Suppose that when the sum of variables and
constants is $w$, \textit{e.g.} there are $l$ free variables, $k$
bounded variables, $c$ constants and $w=l+k+c$, the time is bounded
by $2\Delta w$ rounds. We prove the result for $w+1$
\begin{itemize}
\item when there are $c+1$ constants: there are $\Delta$ rounds
(at most) for the sub-query to get to the additional constant node
and $\Delta$ rounds for the answer to the sub-query getting back.
Therefore the total time is at most $2\Delta (w+ 1)$ rounds. \item
when there are $k+1$ bounded variables: \textit{w.l.o.g.} we assume
that the additional bounded variable is the leftmost bounded
variable, then $\Delta$ rounds are sufficient before instantiating
the second variable to instantiate the first variable, and $\Delta$
rounds for the answer getting to the first instantiating node from
the second one. Therefore the total time is at most $2\Delta
(w+1)$ rounds.
\item when there are $l+1$ free variables: it takes
$\Delta$ rounds for instantiating the additional free variable. So
the total time is $2\Delta w+ \Delta$.
\end{itemize}
Therefore, the distributed time time is bounded by
$2\Delta(w+1)$.
\end{proof}

We can now settle the values of the clocks in the query engine.


\begin{definition}\label{def-clock}
The value of the clock in a network of diameter $\Delta$, for an FO
query  with $w$ variables or constants is $2\Delta w$.
\end{definition}


The next result  shows the robustness of the algorithm: its
independence from the order in which messages are handled by the
query engine.

\begin{proposition}\label{the-mesind}
The distributed first-order query engine is insensitive to the order
of the incoming messages in a round.
\end{proposition}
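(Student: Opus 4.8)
The plan is to show that the final configuration reached by each node — its query table and its answer table — does not depend on the order in which the messages sitting in its in-buffer at the start of a round are processed. The key observation is that the query engine is essentially \emph{monotone} and \emph{idempotent}: every reception rule either does nothing (when a matching record already exists in the query or answer table) or adds a new record and possibly broadcasts a consequence of it, and no rule ever retracts a record. Thus the set of records that will eventually be produced on a node is determined by the set of messages that will eventually be delivered to it, not by their order, provided we also argue that the set of delivered messages is itself order-insensitive.

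First I would fix a round and a node, and consider the multiset $M$ of messages in its in-buffer at the beginning of that round, together with the current query table $Q$ and answer table $A$. I would argue that for any two orderings of $M$, processing them yields the same pair $(Q',A')$ and the same multiset of outgoing messages (up to duplicates, which are harmless since duplicate receptions trigger the ``do nothing'' branch). This is a small case analysis over the four reception rules (Boolean query reception, Boolean answer reception, non-Boolean query reception, distributed tuple answer collection) plus the emission steps they invoke: each rule's effect on $(Q,A)$ is to take a union with a fixed set determined only by the message and by whether certain lookups succeed, and because records are never removed, a lookup that succeeds under one ordering also succeeds under the other once the relevant record has been added. The only subtlety is the evaluation of a parent query upon an answer reception: whether the parent becomes evaluable may depend on \emph{which} sibling answers have arrived, but since all of them are in $M$ (or were already in $A$), they will all have been incorporated by the end of the round regardless of order, so the parent is evaluated to the same value.

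Next I would lift this from a single round to the whole execution by induction on rounds. The delivery events are a deterministic broadcast of whatever is in each node's out-buffer, so if all nodes reach the same configuration at the end of round $r$ independently of intra-round ordering, they receive the same messages in round $r+1$, and by the single-round argument they again reach the same configuration at the end of round $r+1$. The base case is the common initial configuration with empty buffers. Since the clocks (Definition~\ref{def-clock}) are driven purely by round counts and are therefore unaffected, the timeouts that conclude ``$\exists x\psi$ is false'' or ``$\forall x\psi$ is true'' fire at the same round in every ordering, so no order-dependent race is introduced there either. Combining, the final configuration of every node — hence the distributed result of the query — is the same for all orderings.

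The main obstacle I anticipate is being careful about the interaction between a node \emph{emitting} subqueries and \emph{receiving} messages within the same round: one must make sure that the ``no record yet'' test that guards subquery emission is evaluated against a table state that, while order-dependent as an intermediate snapshot, still leads to the same final union of records. The clean way around this is to prove a slightly stronger invariant — that after processing any prefix of $M$ in any order, the current $(Q,A)$ is bounded above (under set inclusion) by the final $(Q',A')$ and below by $(Q,A)\cup(\text{contributions of the processed prefix})$ — and that the final value is exactly the union over all of $M$. Monotonicity then forces convergence to the same fixpoint irrespective of order, and the do-nothing branches guarantee that reprocessing an already-absorbed message contributes nothing new.
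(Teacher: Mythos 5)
Your proposal is correct and rests on essentially the same observations as the paper's (very brief) proof: queries and subqueries arriving at a node do not interact and only add records to the query table, and results are accumulated with a set semantics, so the monotone, never-retracting updates make the outcome order-independent. Your round-by-round induction with the monotonicity/idempotence invariant is simply a more detailed and careful elaboration of that two-sentence sketch.
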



\begin{proof}
There are two fundamental steps in the algorithm of the query
engine: query propagation and result construction. During query
propagation, queries and subqueries arriving on one node have no
interaction. They generate entries in the query table. During result
construction, results of independent queries do not interfere, and
results of the same query are handled with a set semantics.
\end{proof}

We can now define the {\it distributed inference}.

\begin{definition}
Let $G$ be a graph, $\psi$ a formula with $\ell$ free variables, and
$A$ a finite relation of arity $\ell$. We write $G, A
\vdash_{FO}\psi$ if and only if $A$ is  the union of all the answers
produced by the query engine $\mathcal{QE}_{FO}$ on all nodes, upon
request of $\psi$ from any node.
\end{definition}


We next prove the soundness and completeness of the query engine.


\begin{theorem}\label{the-theo-main-fo}
For any network $G$ of diameter at most $\Delta$, and any
first-order formula $\psi$, $G,A \models\psi$ if and only if
$G,A\vdash_{FO}\psi$.
\end{theorem}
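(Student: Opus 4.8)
The plan is to prove both directions by structural induction on the first-order formula $\psi$, strengthening the induction hypothesis so that it talks not only about the top-level query but about every subquery generated during the run. Concretely, I would prove the following strengthened claim: for every subformula $\chi$ of $\psi$ and every instantiation of its free variables by node identifiers $\bar a$, the query engine, when $?B\chi(\bar a)$ is emitted (resp. when the non-Boolean query for $\chi$ is emitted), eventually records in its answer table exactly the truth value $\mathbf{G}\models\chi(\bar a)$ (resp. exactly the set of satisfying tuples), and does so within the time allotted by the clock of Definition~\ref{def-clock}. Theorem~\ref{the-clock} is what guarantees the clocks are long enough: any witness or counterevidence that exists will have propagated back to the instantiating node before the clock on that node elapses, so the ``timeout'' decisions (declaring $\exists x\,\psi$ false / $\forall x\,\psi$ true when nothing came back) are sound. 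Proposition~\ref{the-mesind} lets me ignore message ordering throughout, so I may reason about the set of facts eventually derived rather than about a particular interleaving.

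For the base case I would handle atomic formulas $G(s,t)$ where $s,t$ are variables or constants. By the reception rules, the Boolean query $?B\,G(a,b)$ is forwarded hop by hop until it reaches node $a$, which knows its neighbor list and hence the truth of $G(a,b)$; it returns $!B\,G(a,b)$, which propagates back. By Theorem~\ref{the-clock} with $w=2$ this round trip completes within $2\Delta$ rounds, i.e. within the clock, and the stored answer is correct by definition of the local data (each node stores all its neighbors). The non-Boolean case for an atom instantiates free variables at every node, so each holding node stores exactly its satisfying tuples.

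For the induction step I would go through the connectives and quantifiers following the query-engine description. For $\neg\chi$, $\chi_1\wedge\chi_2$, $\chi_1\vee\chi_2$: the engine evaluates the (fully instantiated) Boolean subqueries for the immediate subformulas, each correct and timely by the induction hypothesis, and then combines the Boolean answers; correctness is immediate from the semantics of the connective, and the clock bound $2\Delta w$ dominates the clock of each subformula (which has fewer variables/constants). For $\exists x\,\chi$ (and dually $\forall x\,\chi$): the node instantiates $x$ by itself, recursively evaluates $\chi(a)$, and \emph{also} broadcasts $\chi(a)$ so that every node instantiates $x$ by its own identifier; by the induction hypothesis each such instance is evaluated correctly, and a positive instance produces a witness message that, by Theorem~\ref{the-clock}, arrives before the clock of $\exists x\,\chi$ elapses — hence the engine reports \emph{true} iff some $a$ makes $\chi(a)$ true, which is exactly $\mathbf{G}\models\exists x\,\chi$. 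The non-Boolean bookkeeping (the ``Distributed tuple answer collection'' step) then places each satisfying tuple $\langle a_1,\dots,a_\ell\rangle$ on node $a_1$, the holding node, so the union of all answer tables over all nodes is exactly $A$.

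The main obstacle is the timeout reasoning: I must argue that the \emph{absence} of a witness/counterevidence message on a node within its clock period genuinely certifies that no witness/counterevidence exists anywhere in the network. This is where Theorem~\ref{the-clock} is used in an essential, quantitative way — I would have to check that the clock assigned to a subquery on an arbitrary (non-requesting) node is still large enough once one accounts for the rounds already spent getting the subquery \emph{to} that node from the requesting node, and that the nesting of clocks is consistent (the clock for a formula with $w$ symbols strictly dominates the clocks of all its subformulas plus the propagation overhead). A secondary subtlety is the duplicate-suppression behavior (a node that already has a record for a query ``does nothing''): I need to confirm this never discards the first, authoritative computation of an answer, only redundant recomputations, which follows from Proposition~\ref{the-mesind} together with the set semantics on answers. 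Once these points are nailed down, the equivalence $G,A\models\psi \iff G,A\vdash_{FO}\psi$ follows by taking $\chi=\psi$ in the strengthened claim.
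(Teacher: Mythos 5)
Your proposal matches the paper's argument in all essentials: both reduce the non-Boolean case to Boolean queries by instantiating the free variables across the network (with satisfying tuples stored at the holding node), and both establish the Boolean case by an induction that peels off quantifiers, using Theorem~\ref{the-clock} to argue that the clock-based timeout decisions for $\exists$/$\forall$ (no witness/counterevidence received $\Rightarrow$ false/true) are sound. The only difference is presentational: you organize the induction structurally over all subformulas, including the Boolean connectives, whereas the paper inducts on the number of bounded variables of a prenex-shaped query and treats variable-free formulas as a separate base case; the key quantitative ingredient (the $2\Delta w$ clock bound covering subquery propagation and the return of witnesses/counterevidences) is the same in both.
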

%
\begin{proof}
First observe that it is sufficient to prove the result for Boolean
formulae. Indeed, if there are $\ell$ free variables  in the query,
they get instantiated by all possible $n^{\ell}$ instantiation when
the query travels around the system of $n$ nodes, resulting in
$n^{\ell}$ Boolean first-order queries. The result of each query
(tuple of $\ell$ constants) is then stored at the key node if it
satisfies the Boolean query.

The result is also rather obvious for variable-free formulae.
Suppose that a query has $c$ ($c\geq 2$) constants and no variables.
The query is broadcasted to every node and once it successively
reaches nodes, it gets the Boolean value for the atoms containing
the corresponding constants, replaces the corresponding atoms by
their value and produces a new query which is broadcasted again. The
result is obtained when the query has reached (at most) $c-1$ of the
constants. Then the answer is sent back to the requesting node. The
total time required is at most $2\Delta (c-1)$. The \textbf{clock}
time being fixed at 2$\Delta c$ rounds, it is suffices to get the
result.

The rest of the proof is done by induction on the number of bounded
variables for Boolean formulae.

\textsc{Basis}: Assume the query has one bounded variable. Then it must has
at least one constants, so $c\geq 1$. First it is broadcasted by the
requesting node and the variable is instantiated by every node, thus
producing $n$ sub-queries with at most $c+1$ constants After the
sub-queries reach at most $c-1$ of the constants (note that one of
the constants stems from instantiating the variable and the
sub-queries gets it immediately at the instantiating node) and get
their answers, the witness for $\exists$ or the counterevidence for
$\forall$ is sent back to the requesting node which then produces
the final answer. If no witnesses/counterevidences are received
before the clock time elapses, a negative/positive answer is
produced by the requesting node.

\textsc{Induction}:  Assume that if the query has  $k$ ($k\geq 2$)
bounded
variables and $c$ constants, \textit{i.e.} the query is in the form:\\
$\psi_{k}=A_{1}x_{1}\dots A_{k}x_{k}\varphi(x_{1}\dots x_{k})$
(denoting $\exists$ or $\forall$ by $A$), then $G\models\psi_{k}$ if
and only if $G\vdash_{FO}\psi_{k}$.

We prove the result for the case when there are $k+1$ bounded
variables in the query
\[
\psi_{k+1}=A_{1}x_{1}\dots A_{k+1}x_{k+1}\varphi(x_{1}\dots x_{k+1})
\]
After the first variable has been instantiated at each node, the $n$
sub-queries of the form
\[
\psi'_{k}=A_{2}x_{2}\dots A_{k+1}x_{k+1}\varphi(x_{2}\dots x_{k+1})
\]
are queries with $k$ bounded variables and $c+1$ constants. They are
then further propagated by the instantiating node. By induction
assumption, $G \models\psi_{k}'$ if and only if
$G\vdash_{FO}\psi_{k}'$, so every node gets a sound answer to
$\psi'_{k}$. After one instantiating node gets the answer to
$\psi'_{k}$, it sends the answer to the requesting node. If it is
true and $A_{1}$ is $\exists$ then the requesting node takes it as a
witness and $\psi_{k+1}$ is true; if it is false and $A_{1}$ is
$\forall$ then the requesting node takes it as a counterevidence and
$\psi_{k+1}$ is false. If the requesting node does not receive any
witnesses/counterevidences until the clock time has elapsed, it
gives a negative/positive answer to $\psi_{k+1}$. Therefore $G
\models\psi_{k+1}$ if and only if $G\vdash_{FO}\psi_{k+1}$.
\end{proof}

We next consider the complexity of the distributed
evaluation. Theorem~\ref{the-comp-fo} is the fundamental result of
this section. It shows the potential for distributed evaluation of
first-order queries with logarithmic in-node time complexity,
distributed time linear in the diameter of the graph, and polynomial
amount of communication.


\begin{theorem}\label{the-comp-fo}
Let $G$ be a graph of diameter $\Delta$, with $n$ nodes, and let
$\varphi$ be a first-order formula with $v$ variables. The
complexity of the distributed evaluation of the query $ \varphi$ on
$G$ by  $\mathcal{QE}_{FO}$ is given by:
\medskip
\begin{tabular}{cccc}
\hs IN-TIME/ROUND   &  DIST-TIME    & MSG-SIZE &  $\#$MSG/NODE \\
  $O(\log n)$ & $O(\Delta)$ & $O(\log n)$ & $O(n^{v+1})$
\end{tabular}
\end{theorem}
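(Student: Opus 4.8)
The plan is to establish each of the four complexity measures separately, relying on the structural facts already proven about $\mathcal{QE}_{FO}$, namely Theorem~\ref{the-clock} (distributed time bounded by $2\Delta w$), Theorem~\ref{the-theo-main-fo} (soundness/completeness), and Proposition~\ref{the-mesind} (order-insensitivity). The DIST-TIME bound is immediate: by Theorem~\ref{the-clock} and Definition~\ref{def-clock}, the evaluation of $\varphi$ with $v$ variables terminates within $2\Delta v = O(\Delta)$ rounds, since $v$ is a constant fixed by the formula. The MSG-SIZE bound follows from inspecting the message formats: a message carries a fixed formula $\varphi$ (constant size, independent of $n$) together with the instantiation markers $?x_i$ and $!a_j$; each instantiated constant $a_j$ is a node identifier, which by the assumptions in Section~\ref{sec-machine} requires $O(\log n)$ bits (identifiers are drawn from a universe of size polynomial in $n$, or at worst the bound $n$ on network size is itself encoded in $O(\log n)$ bits). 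Since there are at most $v$ such constants, each message has size $O(v \log n) = O(\log n)$.

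For $\#$MSG/NODE, the plan is to count the distinct queries and subqueries that can ever appear at a node. Following the query engine's description, a subquery is obtained from $\varphi$ by instantiating some prefix of its free variables (right to left) and some prefix of its bounded variables (outermost in); once all free variables are instantiated the engine emits Boolean queries $?B\varphi(a_1,\dots,a_\ell)$ and then recursively instantiates bounded variables. The total number of such instantiated formulae is therefore at most the number of ways to assign node identifiers to a subset of the $v$ variables, which is $O(n^v)$ (more precisely $\sum_{i=0}^{v} n^i = O(n^v)$). Each node, upon first receiving a given query or subquery, broadcasts it at most a constant number of times (the query itself, plus possibly its Boolean answer $!B\varphi$); the query table check ensures no query is re-broadcast. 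Hence each node sends $O(n^v)$ messages carrying queries, plus $O(n^v)$ messages carrying Boolean answers, plus the $O(n^\ell) = O(n^v)$ tuple-answer messages. Summing and accounting for a possible extra factor of $n$ from answers propagated per query gives $\#$MSG/NODE $= O(n^{v+1})$.

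For IN-TIME/ROUND, the plan is to observe that in any single round a node processes its incoming messages: for each message it performs a lookup in the query table or answer table, possibly evaluates an atom $G(a,b)$ against its locally stored neighbor list, and possibly constructs one new subquery. A lookup among $O(n^v)$ stored entries, with identifiers of $O(\log n)$ bits, costs $O(\log n)$ time per comparison with an appropriate indexing structure — but more carefully, since the relevant claim in Proposition~\ref{the-central-comp-fo} was that non-requesting nodes have $O(\log n)$ per-round complexity, the intended reading is that the number of messages handled in a round is bounded (by the structure of the synchronous schedule and the clocks, a node reacts to a bounded number of new queries per round), so the per-round work is dominated by a constant number of $O(\log n)$-bit operations. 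I would spell this out by noting that the formula is fixed, the subformula structure traversed per message is of constant size, and each elementary manipulation (reading an identifier, comparing two identifiers, checking edge membership in the neighbor list whose size is bounded by the degree bound $D$) takes $O(\log n)$ time.

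The main obstacle, and the point I would be most careful about, is the IN-TIME/ROUND bound: it requires arguing that a node does not have to process too many messages, or rebuild too much, within a single round. One must invoke the synchronous clock discipline and the fact that a node broadcasts each distinct (sub)query only once to bound the number of genuinely new messages processed per round; messages that duplicate already-recorded queries are discarded after an $O(\log n)$ table lookup. A secondary subtlety is making the $O(n^{v+1})$ count tight rather than loose — one must be honest about the extra factor of $n$ and confirm it genuinely arises (e.g. from up to $n$ distinct Boolean answers or witnesses flowing back per instantiated subquery) rather than a larger power, which follows because the answer to any Boolean query is a single bit and each node forwards it at most once. With those two points handled, the remaining bounds are direct consequences of the already-established theorems and of inspecting message formats.
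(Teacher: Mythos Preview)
Your approach matches the paper's own (sketch) proof closely: DIST-TIME is read off Theorem~\ref{the-clock}, MSG-SIZE comes from the message format (identifiers of $O(\log n)$ bits, a fixed number of them), $\#$MSG/NODE is obtained by counting the distinct instantiated subqueries, and IN-TIME/ROUND is argued via the cost of a single table lookup and a single local atom evaluation. The paper's proof is no more detailed than yours on any of these points.

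Two small remarks where you and the paper diverge. For $\#$MSG/NODE, the paper simply writes $\sum_{i=1}^{v} n^i = O(n^{v+1})$ and stops; it does not look for a genuine extra factor of $n$. Your attempt to locate that factor in ``up to $n$ distinct Boolean answers or witnesses flowing back per instantiated subquery'' is not quite right --- a Boolean query has a single bit as answer, and each node forwards it at most once --- so the honest reading is just that $O(n^{v+1})$ is a deliberately loose bound on an $O(n^v)$ quantity. For IN-TIME/ROUND, you are more careful than the paper: the paper's sketch only bounds the cost of one lookup (binary search on a table of size $O(n^v)$, hence $O(\log n)$) and one local evaluation, and does not address how many messages a node might have to process in a single round. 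Your identification of this as ``the main obstacle'' is fair; the paper simply does not discuss it.
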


\begin{proof}
(sketch)\\
We assume that $\varphi$ has $\ell$ free variables, $k$ bounded
variables and $c$ constants. So $v=\ell+k$. Let $w=v+c$.

\noindent IN-TIME/ROUND\\
We consider the complexity in the size of the graph.  The query is
partially evaluated on the local data (identifiers of neighbors) of
$O(\log n)$ size. It is rewritten in a systematic fashion into
sub-queries by instantiating variables. Both operations can be
performed in $O(\log n)$ time. The searching on the query table and
answer table (both of size $O(n^v)$) can be done in $O(\log n)$ time
as well by binary searching.

\noindent DTIME\\
As shown in Theorem~\ref{the-clock}, the distributed time for a
query is $2\Delta w$, so the time complexity is in $O(\Delta)$.

\noindent MSG-SIZE\\
It is evident that MSG-SIZE is $O(\log n)$.

\noindent $\#$MSG/NODE\\
During the distributed evaluation of queries, new queries can be
generated by instantiating free and bounded variables. The total
number of queries generated during the distributed evaluation is
$O(\sum_{i=1}^{v}n^i)$, which is $O(n^{v+1})$. So the number of
queries and answers received by each node is $O(n^{v+1})$.
Therefore, the number of messages sent by each node is $O(n^{v+1})$.
\end{proof}

Note that the first-order query engine relies on a naive evaluation
of queries. It can be optimized by taking advantage of the patterns
in the query to limit the propagation of subqueries, but this does
not affect the global complexity upper bounds.

\section{Distributed complexity of FP}\label{sec-QE-FP}

We next consider the complexity upper bounds for FP. It relies on a
query engine which is defined as follows. Note that we first assume
that the system is synchronous and we discuss asynchronous
systems at the end of the present section.

\noindent
\textbf{Query engine for $FP$}, $\mathcal{QE}_{FP}$.
At first, the requesting node broadcasts $\mu(\varphi(T)(
x_1,\dots,x_{\ell}))$ (where $T$ is a relational symbol of arity $\ell$).
It takes $\Delta$ rounds for all nodes to receive the query. In
order to coordinate the computation of the stages of the fixpoint on
different nodes, a hop counter $c$ is broadcasted together with the
query $\mu(\varphi(T)(x_1,\dots,x_{\ell}))$, and a clock $\sigma$ is
set for each node. Initially, the requesting node sets $\sigma =
\Delta$, and broadcasts
$(\mu(\varphi(T)(x_1,\dots,x_{\ell})),\Delta-1)$ to its neighbors.
Each node receiving messages of the form
$(\mu(\varphi(T)(x_1,\dots,x_{\ell})),c)$ sets $\sigma=c$ and
propagates the formula $(\mu(\varphi(T)(x_1,\dots,x_{\ell})),c-1)$ to its
neighbors, unless $c=0$ or $\sigma$ has been set before.

When the clock $\sigma$ expires, each node $a$ sets a local table
for $T$ and performs the recursion on $\mu(\varphi(T))$ by iterating
the use of the first-order query engine $\mathcal{QE}_{FO}$ on the
query $\varphi(T)$ as follows:

\vspace*{-2mm}
\begin{itemize}
\item $a$ sets a clock $\tau=2\Delta w$ (where $w$ is the number of
variables or constants in $\varphi(T)$), evaluates the query $?x_1\dots
?x_{\ell-1}!a \varphi(T)(x_1,\dots,x_{\ell-1},a)$ using
$\mathcal{QE}_{FO}$, which takes time $2\Delta w$.

\item If $a$ receives a query $?x_1!a_2\dots !a_{\ell}\varphi(T)(x_1,a_2,\dots,a_{\ell})$ before $\tau$ expires,
$x_1$ is instantiated by $a$ to get the
subquery $!a!a_2\dots !a_{\ell}\varphi(T)(a,a_2,\dots,a_{\ell})$,
and the evaluation of the Boolean query
$?B\varphi(T)(a,a_2,\dots,a_{\ell})$ starts. If $a$ gets a positive answer
to that Boolean query,
it stores $\langle a,a_2,\dots,a_{\ell}\rangle$ in a temporary buffer.

\item When the clock $\tau$ expires, node $a$ updates the local table for $T$ and
sets another clock $\eta=\Delta$. If some new tuples $\langle
a,a_2,\dots,a_{\ell}\rangle$ have been produced, $a$ broadcasts an
informing message to its neighbors, which will be propagated further
to all the nodes in the network to inform them that the computation has not reached a fixpoint yet.

\item If some new tuples have been produced in $a$ or $a$ has received some informing
messages when the clock $\eta$ expires, it resets $\tau=2\Delta w$ and starts the next
iteration, otherwise the evaluation terminates. \qed
\end{itemize}

\begin{definition}
Let $\mu(\varphi(T))$ be a fixpoint formula,  $G,
I\vdash_{FP}\mu(\varphi(T))$ if and only if upon request of
$\mu(\varphi(T))$ from any node $a$, the query engine
$\mathcal{QE}_{FP}$ produces answer $I$ distributed in the network.
\end{definition}


As for FO, we show that the query engine is sound and complete.

\begin{theorem}\label{the-theo-fp}
For a network $G$ and   $\mu(\varphi(T))$ a  fixpoint formula,
$G,I\models\mu(\varphi(T))$ if and only if $G,I\vdash_{FP}\mu(\varphi(T))$.
\end{theorem}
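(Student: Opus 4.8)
The plan is to reduce the statement to a claim about the successive \emph{stages} $I_0 \subseteq I_1 \subseteq \cdots$ of the inflationary fixpoint, and then prove that claim by induction on $i$, using Theorem~\ref{the-theo-main-fo} as the correctness engine for a single stage. Concretely, I would establish the invariant: \emph{at the moment the clock $\eta$ of the $i$-th iteration expires on a node, the union over all nodes of the local tables for $T$ equals $I_i$, and each tuple $\langle a_1,\dots,a_\ell\rangle \in I_i$ is stored on its holding node $a_1$}. Granting the invariant, both directions of the ``iff'' follow at once: the engine is deterministic (by Proposition~\ref{the-mesind} the FO sub-engine is insensitive to message order) and, by the invariant, it halts precisely at the first iteration $i$ with no new tuple anywhere, which is the first $i$ such that $\varphi(I_i)\cup I_i = I_i$, i.e.\ $I_i = I$; at that point the distributed answer is exactly this $I$. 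Hence $G,I \vdash_{FP} \mu(\varphi(T))$ holds iff $I$ is the inflationary fixpoint iff $G,I \models \mu(\varphi(T))$.

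For the base case $i=0$ every local table is empty, matching $I_0 = \emptyset$. For the inductive step, assume the invariant for $i$. First I would verify the coordination: the hop counter $c$ broadcast with the query, together with the clock $\sigma$, guarantees that a node at distance $d$ from the requesting node starts its first iteration exactly $d$ rounds ``late''; since each iteration has the same fixed duration on every node ($\tau = 2\Delta w$ followed by $\eta = \Delta$), the iteration boundaries stay aligned up to this offset of at most $\Delta$. Consequently, during the $\tau$-phase of iteration $i+1$ no node has yet modified its table beyond $I_i$, so every invocation of $\mathcal{QE}_{FO}$ ``sees'' the interpretation $T := I_i$. By Theorem~\ref{the-clock} the clock $\tau = 2\Delta w$ is long enough for the first-order engine to terminate, and by Theorem~\ref{the-theo-main-fo} --- extended in the obvious way so that a $T$-atom $T(a_1,\dots,a_\ell)$ is resolved by querying its holding node $a_1$, exactly as a $G$-atom is resolved at its first argument --- the tuples collected on holding nodes are exactly $\varphi(I_i)$. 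The table update then adds these to the tuples already present (which by the invariant represent $I_i$), so after the update the union of the local tables is $\varphi(I_i)\cup I_i = I_{i+1}$, each tuple on its holding node. Finally, the informing-message mechanism with clock $\eta = \Delta$ ensures that if $I_{i+1}\neq I_i$ then some node broadcasts, the message reaches every node within $\Delta$ rounds, and all nodes proceed to iteration $i+2$; if $I_{i+1} = I_i$ no informing message is ever emitted and every node halts. This closes the induction.

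The main obstacle, and the part needing the most care, is the synchronization argument: one must show rigorously that the clocks $\sigma$, $\tau$, $\eta$ and the hop counter $c$ keep the distributed iterations globally consistent, so that the FO sub-evaluation of stage $i+1$ runs against a single coherent snapshot $I_i$ of $T$ and never against a mixture of $I_i$ and $I_{i+1}$ on different nodes --- and, symmetrically, that a node never mistakes the silence preceding the arrival of an informing message for global quiescence (this is exactly why $\eta$ is set to $\Delta$ rather than something smaller). Once snapshot consistency is in place, the remainder is a routine combination of the inflationary-fixpoint induction with the already-proved correctness (Theorem~\ref{the-theo-main-fo}) and timing bound (Theorem~\ref{the-clock}) of the first-order query engine. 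A secondary, minor point to spell out is the extension of $\mathcal{QE}_{FO}$ and of Theorem~\ref{the-theo-main-fo} to formulae over the signature $\{G,T\}$ with $T$ a stored distributed relation; this is immediate, since $T$-atoms are handled exactly like $G$-atoms by querying the holding node.
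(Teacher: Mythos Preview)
Your overall strategy --- induct on the stages $I_i$ and invoke Theorem~\ref{the-theo-main-fo} for each iteration of the FO sub-engine --- is exactly what the paper intends; the paper's own ``proof'' is simply the sentence that the result follows easily from Theorem~\ref{the-theo-main-fo}, so you are filling in the details it omits rather than taking a different route.

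There is, however, a concrete error in your synchronization argument. You write that a node at distance $d$ from the requesting node ``starts its first iteration exactly $d$ rounds late'' and that the iteration boundaries ``stay aligned up to this offset of at most $\Delta$.'' That is not what the hop counter $c$ together with the clock $\sigma$ achieves, and if it were, your invariant would fail: with an offset of $\Delta$, a fast node $A$ would finish its $\tau$-phase and update its $T$-table at time $t+2\Delta w$, which lies \emph{inside} a slow node $B$'s $\tau$-phase $[t+\Delta,\,t+\Delta+2\Delta w]$; hence $B$'s FO evaluation could read a mixture of $I_i$ and $I_{i+1}$, exactly the snapshot inconsistency you flag as the main obstacle. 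The actual mechanism is stronger: the requesting node sets $\sigma=\Delta$, and a node at distance $d$ receives the counter $c=\Delta-d$ at round $d$ and sets $\sigma=c$, so every node's $\sigma$ expires at the \emph{same} global round $\Delta$. All nodes therefore start every iteration simultaneously, and the tables are updated in lockstep; the snapshot-consistency claim then becomes immediate rather than delicate. Once you correct this reading of the coordination mechanism, the rest of your induction goes through as written.
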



\noindent The proof of Theorem~\ref{the-theo-fp} follows easily from
Theorem~\ref{the-theo-main-fo}.


\begin{theorem}\label{the-comp-fp}
Let $G$ be a graph of diameter $\Delta$, with $n$ nodes, $T$ a
relation symbol of arity $\ell$, and
$\mu(\varphi(T)(x_1,\dots,x_{\ell}))$ be a FP formula with $v =
\ell+k$ (first-order) variables ($\ell$ free and $k$ bounded). The
complexity of the distributed evaluation of the query
$\mu(\varphi(T))$ by $\mathcal{QE}_{FP}$ on $G$ is given by:

\medskip
\begin{tabular}{cccc}
\hs IN-TIME/ROUND   &  DIST-TIME    & MSG-SIZE &  $\#$MSG/NODE \\
  $O(\log n)$ & $O(n^{\ell}\Delta)$ & $O(\log n)$ & $O(n^{\ell+v+1})$
\end{tabular}

\end{theorem}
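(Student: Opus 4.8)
The plan is to analyze $\mathcal{QE}_{FP}$ as an initial broadcast phase followed by a sequence of \emph{stages}, where each stage is a single run of the first-order engine $\mathcal{QE}_{FO}$ on the body $\varphi(T)$ with $T$ interpreted by the relation computed so far, followed by a short termination-detection sub-phase governed by the clock $\eta$. The whole argument then reduces to (i) bounding the number of stages, (ii) arguing that the clock schedule keeps the stages synchronized across nodes so that Theorem~\ref{the-comp-fo} and Theorem~\ref{the-clock} apply to each stage, and (iii) summing the four measures over the stages. For (i), since the fixpoint relation $T$ has arity $\ell$, the inflationary stages $I_0=\emptyset\subseteq I_1\subseteq I_2\subseteq\cdots$ form an increasing chain of subsets of $V^\ell$ until stabilization, so there are at most $n^\ell$ stages (plus one more to detect that no new tuple was added).

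For (ii), I would note that a node at hop-distance $d$ from the requesting node receives $(\mu(\varphi(T)),c)$ at round $d$ and sets $\sigma=\Delta-d$, hence every node enters the first stage at round exactly $\Delta$; since $\tau=2\Delta w$ and $\eta=\Delta$ are the same constants on every node, and $\eta=\Delta$ rounds suffice for an informing message to reach every node, the nodes proceed through the stages in lockstep, each stage occupying a common window of $\Delta(2w+1)$ rounds. Within the $\tau$-window of stage $i$, no node has yet updated its local $T$-table beyond $I_{i-1}$ (updates happen only when $\tau$ expires), so every $T$-atom queried during that window is evaluated against the consistent global relation $I_{i-1}$, exactly as $G$-atoms are; thus stage $i$ is a faithful execution of $\mathcal{QE}_{FO}$ on $\varphi(T)$, a first-order formula with $v$ variables and $c$ constants, $w=v+c$.

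Given this, the four bounds follow. IN-TIME/ROUND: in any round a node either performs $\mathcal{QE}_{FO}$-work, which is $O(\log n)$ by Theorem~\ref{the-comp-fo}, or updates its local $T$-table (a set of at most $n^\ell$ tuples of $O(\log n)$ bits each, so $O(\log n)$ per insertion by binary search), or processes a hop-counter/informing message ($O(\log n)$); hence $O(\log n)$. MSG-SIZE: every message carries the fixed formula together with $O(1)$ node identifiers or a counter bounded by $\Delta$, hence $O(\log n)$ bits. DIST-TIME: $\Delta$ rounds for the initial broadcast plus at most $(n^\ell+1)$ stages of $\Delta(2w+1)$ rounds each, and since $w$ depends only on $\varphi$, this is $O(n^\ell\Delta)$. $\#$MSG/NODE: by Theorem~\ref{the-comp-fo} each stage costs $O(n^{v+1})$ messages per node for the FO sub-evaluation, plus $O(n)$ control messages per node (broadcasting and relaying the informing flag); since $v\geq 1$ this is $O(n^{v+1})$ per stage, and multiplying by the $O(n^\ell)$ stages gives $O(n^{\ell+v+1})$.

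The main obstacle is step (ii): one must check that the clock discipline really does prevent any node from running the FO sub-evaluation of stage $i$ against a $T$-table that is either stale (still at $I_{i-2}$) or prematurely advanced (already at $I_i$), since a single out-of-sync node would make the FO evaluation compute against an inconsistent interpretation of $T$ and break the reduction to Theorem~\ref{the-comp-fo}. This is where the choice $\eta=\Delta$ (enough time for the ``not yet a fixpoint'' signal to traverse the network) and the fact that $\tau$ and $\eta$ are node-independent constants are essential; the remaining estimates are routine once this synchronization is in place, and in fact parallel the FO case of Theorem~\ref{the-comp-fo} stage by stage. Correctness of the relation $I$ that is actually computed is already handled by Theorem~\ref{the-theo-fp}.
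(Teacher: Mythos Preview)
Your proposal is correct and follows essentially the same approach as the paper: bound the number of iterations by $n^\ell$ (since $T$ has arity $\ell$), observe that each iteration costs $2\Delta w+\Delta$ rounds and $O(n^{v+1})$ messages per node by appealing to the first-order bounds, and multiply. Your treatment is in fact more thorough than the paper's: you explicitly argue the synchronization invariant (that all nodes enter each stage in lockstep and see the same $I_{i-1}$), and you address IN-TIME/ROUND and MSG-SIZE directly, whereas the paper's proof leaves these implicit and focuses only on DIST-TIME and $\#$MSG/NODE.
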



\begin{proof}
Let $w$ be the total number of variables and constants in
$\varphi(T)(x_1, \dots, x_{\ell})$.

Messages
$(\mu(\varphi(T)(x_1,\dots,x_{\ell})),hop)$ are transferred in the
network, before the clock $\sigma$ expires, which takes $O(\Delta)$ round and $O(1)$ messages for each
node.

Queries $?x_1\dots ?x_{\ell-1}!a\varphi(T)(x_1,\cdots,x_{\ell-1},a)$ are
evaluated after the clock $\sigma$ expires, before $\tau$ expires. $O(n^{v})$ messages are sent by each node for each
such query (there are at most $v-1$ variables in
$\varphi(T)(x_1,\cdots,x_{\ell-1},a)$) by Theorem~\ref{the-comp-fo}.
Since there are $n$ such $?x_1\dots
?x_{\ell-1}!a\varphi(T)(x_1,\cdots,x_{\ell-1},a)$ queries, the total
number of messages sent by each node is $O(n^{v+1})$.

When $\tau$ expires, each node sets a clock $\eta=\Delta$, and
broadcasts informing messages to its neighbors if some new tuples
are produced. Each node receives the informing message will
broadcast it to its neighbors unless it has done that before. Each
node sends $O(1)$ informing messages before $\eta$ expires.

When $\eta$ expires, if a node has produced some new tuples or
received some informing messages during the previous iteration, it
starts the next iteration.

So before the evaluation terminates, in each iterating period $2\Delta w
+\Delta$ after the expiration of $\sigma$, at least one new tuple in
$T$ is produced in some node, thus there are at most $n^{\ell}$ such
periods before the termination of the evaluation since there are at
most $n^{\ell}$ tuples in $T$.

Consequently the total time of the evaluation is in $\Delta+
n^{\ell}(2\Delta w + \Delta)=O(n^{\ell}\Delta)$.

Because in each such period, $O(n^{v+1})$ messages are sent by each
node, so the total number of messages sent by each node before the
termination of the evaluation is $O(n^{\ell+v+1})$.
\end{proof}

Although the complexity upper-bound for DIST-TIME and $\#$MSG/NODE
is polynomial, the exponent relates to the number of variables. For
most networking functionalities, this number is small, and the
dependencies between the variables, might even lower it.

The algorithm $\mathcal{QE}_{FP}$ above can be adapted to an asynchronous system
by using a breath-first-search (BFS) spanning tree (with the
requesting node as the root), without impact on the complexity
bounds. If an arbitrary spanning tree, not necessarily a BFS tree,
is used, then the complexity bounds does not change, except the
distributed time, which becomes $O(n^{\ell+1})$.

\smallskip

Note that with $\mathcal{QE}_{FP}$, nodes are coordinated to compute
every stage of the fixpoint simultaneously by using the clock
$2\Delta w$, which is critical for preserving the centralized
semantics of $FP$ formulae. However if $\varphi$ is monotone on $T$,
the centralized semantics of the fixpoint is preserved no matter whether the
stages are computed simultaneously or not.
Similar results can be shown for alternative definitions of the fixpoint logic, such as Least Fixpoint.


\section{In-node behavioral compilation}\label{sec-netquest}

In this section, we see how to transform FO and FP formulae, which express queries at the global level of abstraction of the graph, to equivalent rule programs that model the behavior of nodes. We first introduce the $Netlog$ language.

A $Netlog$ program is a finite set of rules of the form:
\[
(\uparrow)~\gamma_0 :- \gamma_1;\dots;\gamma_l.
\]
where $l\geq 0$.
The {\it head} of the rule $\gamma_0$ is an atomic  first-order formula.
The {\it body}, $\gamma_1;\dots;\gamma_l$ is constituted of literals, i.e., atomic ($R(\overrightarrow x)$) or negated atomic ($\neg R(\overrightarrow x)$) formulae. Each atomic formula $\gamma_i$ has a {\it holding variable}, which is written explicitly as $@x$ and specifies the node on which the evaluation is performed. The {\it communication construct}, $\uparrow$, is added before the head if the result is to be pushed to neighbors.

In the sequel we denote the head of a rule $r$ as $head_r$ and the body as $body_r$ and denote the holding variable of a formula $\gamma_i$ as $hv_{\gamma_i}$.
The relations occurring in the head of the rules are called {\it intentional relations}.

Some localization restrictions are imposed on the rules to ensure the effectiveness of the distributed evaluation.
\begin{enumerate}
\item[(i)] All literals in the body have the same holding variable;
\item[(ii)] the head is not pushed (by $\uparrow$) if the holding variable of the head is the holding variable of the body;
\item[(iii)] if the head is pushed (by $\uparrow$), assuming the holding variable of the head is $x$ and the holding variable of the body is $y$, then $G(@y,x)$ is in the body.\end{enumerate}

A Netlog program is running on each node of the network concurrently.
All the rules are applied simultaneously on a node. The holding variable of literals in the body is instantiated by the node ID itself. Facts deduced are stored on the node if the rule is not modified by $\uparrow$. Otherwise, they are sent to nodes interpreting the holding variable of the head.

On each node,  (i) phases of executions of the rules on the node and (ii) phases of communication with other nodes are alternating till no new facts are deduced on each node.
The global semantics is defined as the union of the facts obtained on each node.

For a graph ${\bf G}=(V,G)$, an instance $I$ such that $I=\bigcup\limits_{v\in V} I_v$ where $I_v$ is the fragment of $I$ stored on node $v$, a rule:
\[r: Q(\overrightarrow x):-R_1(\overrightarrow {y_1});\dots;R_m(\overrightarrow {y_m});\neg R_{m+1}(\overrightarrow {y_{m+1}});\dots;\neg R_l(\overrightarrow {y_l}).\]

and an instantiation $\sigma$ of the variables occurring in $r$,
\[(I,\sigma)\models_{\bf G} R_1(\overrightarrow {y_1});\dots;R_m(\overrightarrow {y_m});\neg R_{m+1}(\overrightarrow {y_{m+1}});\dots;\neg R_l(\overrightarrow {y_l})\]
if and only if
\[
R_i(\sigma(\overrightarrow {y_i}))\left\{\begin{array}{ll}\in I_{\sigma(y)}\cup G, \mbox{ for }i\in[1,m]\\\notin I_{\sigma(y)}\cup G, \mbox{ for }i\in[m+1,l]\end{array}\right.\]
where $y$ is the holding variable of $body_r$.

We define the {\it immediate consequence operator} of a $Netlog$ program $P$ as a mapping from an instance $I$ to an instance:
\[
\Psi_{P,{\bf G}}(I)=\bigcup\limits_{v\in V}\left\{Q(\overrightarrow u)\left|\begin{array}{ll} \exists r\in P: Q(\overrightarrow x):-body_r\\\exists\sigma s.t. (I,\sigma)\models_{\bf G} body_r;\\\overrightarrow u=\sigma(\overrightarrow x);\sigma(hv_{Q(\overrightarrow x)})=v.\end{array}\right.\right\}
\]

The {\it computation of a $Netlog$ program $P$ on a graph ${\bf G}$} is given by the following sequence:
\begin{eqnarray*}
I_0&=&\emptyset;\\
I_{i+1}&=&\Psi_{P,{\bf G}}(I_i),i\geq 0
\end{eqnarray*}
The computation of $P$ on ${\bf G}$ {\it terminates} if the sequence $(I_i)_{i\geq 0}$ converges to a fixpoint. If the computation of $P$ on ${\bf G}$ terminates, we define $P({\bf G})$ to be the least fixpoint obtained by the computation sequence $(I_i)_{i\geq 0}$.

\bigskip

Before we see how  FO or FP formulae can be rewritten into $Netlog$ programs, let us first illustrate the technique on the  examples of Section~\ref{sec-ql}.

\begin{example}
The following program computes the OLSR like table-based routing protocol
as defined in Section~\ref{sec-ql}:

\begin{eqnarray*}
 T(@x,d,d)&:-&G(@x,d).\\
T(@x,h,d)&:-& \neg existT(@x,d); G(@x,h);askT(@x,h,d).\\
existT(@x,d)&:-&T(@x,u,d).\\
 \uparrow askT(@x,h,d)&:-&T(@h,z,d); G(@h,x);x\neq z.\\
T(@x,d,d)&:-& T(@x,d,d).
\end{eqnarray*}
New predicates ($askT$) are introduced to store partial results that are computed on some nodes,
and used by other nodes to which they have been forwarded. The last rule ensures the inflationary behavior (accumulation of results).
\end{example}
%
%

\begin{example}
The following program computes spanning trees as defined in Section~\ref{sec-ql}.
Several new predicates are introduced to reduce the complexity of the formula ($delay,rej$) and to ensure the transfer of data between the nodes involved in the computation ($askST$).
\begin{eqnarray*}
\uparrow ST(x,@y)&:-&G(@x,y); ReqNode(@x).\\
%
%
ST(x,@y)&:-&\neg existST(@y);delay(x,@y); \neg rej(x,@y).\\
\uparrow askST(x,@y)&:-&ST(w,@x);G(@x,y);w\neq y.\\
existST(@y)&:-&ST(x,@y).\\
rej(x',@y)&:-&askST(x,@y);askST(x',@y); x'\geq x.\\
delay(x,@y)&:-&askST(x,@y).\\
ST(x,@y)&:-&ST(x,@y).
\end{eqnarray*}

\end{example}
%
%

\begin{example}
The following program computes the AODV like on-demand routing protocol
as defined in Section~\ref{sec-ql}.

\begin{eqnarray*}
\uparrow RouteReq(x,@y,d)&:-&G(@x,y); ReqNode(@x);dest(d).\\
RouteReq(x,@y,d)&:-&askRouteReq(x,@y,d); \neg existRR(@y,d).\\
\uparrow askRouteReq(x,@y,d)&:-& RouteReq(w,@x,d); G(@x,y);x\neq d;w\neq y.\\
existRR(@y,d)&:-&RouteReq(w',@y,d).\\
\uparrow Nexthop(@x,d,d)&:-&RouteReq(x,@d,d); G(@d,x).\\
\uparrow Nexthop(@x,y,d)&:-& RouteReq(x,@y,d); Nexthop(@y,z,d);G(@y,x).\\
RouteReq(x,@y,d)&:-&RouteReq(x,@y,d).\\
Nexthop(@x,d,d)&:-&Nexthop(@x,d,d).\\
\end{eqnarray*}

\end{example}

We now consider the general translation of FO and FP formulae to $Netlog$ programs.
It has been shown in \cite{EbbinghausFlum99} that FP is equivalent to $Datalog^\neg$ both with inflationary semantics. Moreover, both FO and FP formulae can be translated effectively to $Datalog^\neg$ programs. We therefore consider the translation of $Datalog^\neg$ programs into equivalent $Netlog$ programs. The main difficulty relies in the distribution of the computation.

The syntax and semantics of $Datalog^\neg$ is similar to the one of $Netlog$, but without the communication primitives. Indeed, unlike $Netlog$, a program in $Datalog^\neg$, is processed in a centralized manner. The {\it computation of a $Datalog^\neg$ program $P$ on a graph ${\bf G}$} is given by the following sequence:
\begin{eqnarray*}
I_0&=&\emptyset;\\
I_{i+i}&=&\Psi_{P,{\bf G}}(I_i)\cup I_i,i\geq 0
\end{eqnarray*} where $\Psi_{P,{\bf G}}(I_i)$ is defined in a similar way as for $Netlog$.

The following algorithm rewrites a $Datalog^\neg$ program
$\mathcal{P}_{DL}$ into a $Netlog$ program $\mathcal{P}_{NL}$. To
synchronize stages of the recursion, there is a fact ``$start(a)$''
stored on each node $a$ at the beginning of the computation which
triggers a clock used to coordinate stages.

In the sequel we do not distinguish between $G(@x,y)$ and $G(@y,x)$. 

\noindent{\bf Rewriting Algorithm:}

The algorithm rewrites the input program step by step.

\noindent{\bf Step 1: Distributing Data}

Input: $\mathcal{P}_{DL}$. Output: $\mathcal{P}_1$.

Algorithm $Localize(\mathcal{P}_{DL})$ chooses one variable as the holding variable for each relation in $\mathcal{P}_{DL}$. $\mathcal{P}_1$ is obtained by marking the holding variable of each literal in $\mathcal{P}_{DL}$.

The Rewriting Algorithm supports different assignment of holding variables. For simplicity, we assume the left most variable of each relation is chosen as holding variable.  For lack of space, we do not address the associated optimization problem.

\noindent{\bf Step 2: Distributing Computation}

Input: $\mathcal{P}_1$. Output: $<P_2,\kappa>$

Let $\Delta$ be the diameter of {\bf G}.

For each rule $r\in \mathcal{P}_1$, assume
\begin{itemize}
\item $hv_{head_r}$ is the holding variable of $head_r$,
\item $h_r:=hv_{head_r}$, and
\item $CN_r:=\{h_r\}$.
\end{itemize}

$Rewrite(r,h_r,CN_r)$ recursively rewrites the rule $r$ into several rules until the output rules satisfy the localization restriction (i). $body_r$ is divided into several parts: the local part that can be evaluated locally and the non-local part that cannot be evaluated locally. $h_r$ is the holding variable of the literals in the local part. The non-local part is partitioned into several disconnected parts which share no variables except the variables in $CN_r$ and are evaluated by additional rules $r_i$ on different nodes in parallel. The deduced facts of $r_i$ are pushed to the node where the rule $r$ is evaluated. Meanwhile, it calculates the number of rounds $\kappa_r$ for evaluating $r$.

\noindent${\bf Rewrite(r,h_r,CN_r)}: \mbox{ output }<T_r,\kappa_r>$

Begin

Assume \[r:\gamma :- \gamma_1; \dots; \gamma_l.\] where $l\geq 1$.

Let $S=\{\gamma_1, \dots, \gamma_l\}$, $S'=\{\gamma_i|\gamma_i\in S\mbox{ and } hv_{\gamma_i}=h_r\}$, so that $S'$ contains all the literals in $body_r$ whose holding variable is the same as the one of the head, $h_r$.
\hspace*{-1cm}
\begin{enumerate}
\item[-] If $S'=S$, then $T_r:=\{r\}$, and $\kappa_r:=1$.
\item[-] If $S'\neq S$,

Begin

Let $S'':=S-S'$, so that $S''$ contains all the literals in $body_r$ whose holding variables are not $h_r$.

For $\gamma_j,\gamma_k\in S''$, let $\gamma_j \approx \gamma_k$ if $\gamma_j$ and $\gamma_k$ have some common variables besides the variables in $CN_r$. Assume $\{S''_{1},\dots,S''_{n}\}~(n\geq 1)$ is a partition of $S''$ in minimal subsets closed under $\approx$, so that the literals in $S''$ are divided into disconnected ''subgraph'' components.

For each $S''_i,i\in[1,n]$, let
\[T_i:=\{hv_{\gamma_{iw}}\vert\gamma_{iw}\in S''_i\mbox{ and } G(@h_r,hv_{\gamma_{iw}})\in S'\}.\]
so that $T_i$ contains the variables which are the holding variable of one literal in $S''_i$ and are also a neighbor of $h_r$.
\begin{enumerate}
\item[-] If $T_i\neq\emptyset$, which means the non-local part $S''_i$ is connected with the local part $S'$. Choose one variable $hv_{\gamma_{iu}}$ from $T_i$. Let $S''_i:=S''_i\cup \{G(@hv_{\gamma_{iu}},h_r)\}$. Let $h_{r_i}:=hv_{\gamma_{iu}}$. Let $CN_{r_i}:=CN_r\cup \{h_{r_i}\}$. Let $d_{r_i}:=1$. Assume $S''_i=\{\gamma_{i,1},\dots,\gamma_{i,m_i}\}$. Let
\[r_{i}: Q_{i}(\overrightarrow {y_{i}}) :- \gamma_{i,1};\dots;\gamma_{i,m_i}.\]
where $Q_i$ is a new relation name and $\overrightarrow {y_j}$ contains all the variables occurring both in $S''_i$ and in either $S'$ or $head_r$, that is in $var(S''_i)\cap(var(S')\cup var(head_r))$, with $h_r$ as holding variable.
\item[-] If $T_i=\emptyset$, then the non-local part $S''_i$ is disconnected from the local part $S'$. Choose one literal $\gamma_{it}\in S''_i$. Assume $y$ is a variable not occurring in $r$, let $S''_i:=S''_i\cup \{y=hv_{\gamma_{it}}\}$. Let $h_{r_i}:=hv_{\gamma_{it}}$. Let $CN_{r_i}:=CN_r\cup \{h_{r_i}\}$. Let $d_{r_i}:=1+\Delta$. Assume $S''_i=\{\gamma_{i,1},\dots,\gamma_{i,m_i}\}$. Let
\[r_{i}: Q_{i}(\overrightarrow {y_{i}}) :- \gamma_{i,1};\dots;\gamma_{i,m_i}.\]
where $Q_i$ is a new relation name and $\overrightarrow {y_i}$ contains all the variables occurring both in $S''_i$ and in either $S'$ or $head_r$, that is in $var(S''_i)\cap(var(S')\cup var(head_r))$, with $y$ as holding variable. Moreover, let
\[r'_{i}: Q_{i}(@x\dots):-Q_{i}(@y\dots);G(@y,x).\]
\end{enumerate}
Assume $S'=\{\gamma'_{1},\dots,\gamma'_k\}~(k\geq 0)$, let
\[r': \gamma :- \gamma'_{1};\dots;\gamma'_k; Q_{1}(\overrightarrow {y_{1}});\dots;Q_{n}(\overrightarrow {y_{n}}).\]
$Q_i(\overrightarrow {y_i})$, $i\in[1,n]$, is called sub-query.

Assume $<T_{r_i},\kappa_{r_i}>=Rewrite(r_i,h_{r_i},CN_{r_i})$, let
\begin{itemize}
\item $T_r:=\{r'\}\cup\bigcup\limits_{i\in[1,n]}(\{r'_i\}\cup T_{r_i})$, and
\item $\kappa_r:=max\{\kappa_{r_i}+d_{r_i}|i\in[1,n]\}$,
\end{itemize}

End
\end{enumerate}
End

Finally, let
\begin{itemize}
\item
$P_2:= \bigcup\limits_{r\in \mathcal{P}_1}T_r$, and
\item $\kappa:=max\{\Delta,max\{\kappa_r|r\in \mathcal{P}_1\}\}.$
\end{itemize}

\noindent{\bf Step 3: Communication}

Input: $<\mathcal{P}_2,\kappa>$. Output: $<\mathcal{P}_3,\kappa>$.

$\mathcal{P}_3$ is obtained by adding $\uparrow$ in the head of each rule $r$ where $r\in\mathcal{P}_2$ with the holding variable of the head different from the holding variable of the body. So that rules in $\mathcal{P}_3$ satisfy the localization restriction (ii) and (iii).

\noindent{\bf Step 4: Stage coordination with clocks}

Input: $<\mathcal{P}_3,\kappa>$. Output: $\mathcal{P}_4$.

The rules in $\mathcal{P}_3$ are modified as follows:
\begin{itemize}
\item[-] Add the literals ''$clock(@x,q)$'' and ''$q\neq 0$'' to the body of each rule, where $x$ is the holding variable of the body.
\item[-] For each rule with an intensional relation $R$ of $\mathcal{P}_{DL}$ in its head, replace $R$ in the head with $tempR$ and add
\begin{eqnarray*}
R(\overrightarrow x)&:-&tempR(\overrightarrow x);clock(@x,0).\\
continue(@x)&:-&tempR(\overrightarrow x);\neg R(\overrightarrow x);clock (@x,0).\\
\uparrow inf(@y,x)&:-&tempR(\overrightarrow x);\neg R(\overrightarrow x);clock (@x,0); G(@x,y).
\end{eqnarray*}
in $\mathcal{P}_4$ where $x$ is the holding variable of both $R$ and $tempR$.

\item[-] Add

\begin{eqnarray*}
continue(@x)&:-&start(@x).\\
\uparrow inf(@y,x)&:-&start(@x);G(@x,y).\\
clock(@x,\kappa)&:-&start(@x).\\
clock(@x,p)&:-&clock(@x,q);q\geq 1; p=q-1;\neg stop(@x).\\
clock(@x,\kappa)&:-&clock(@x,0);\neg stop(@x).\\
\uparrow inf(@z,x)&:-&inf(@y,x);G(@y,z); x\neq z;clock(@x,q);q\geq \Delta.\\
continue(@x)&:-&inf(@x,y);clock(@x,q); q\neq 0.\\
continue(@x)&:-&continue(@x);clock(@x,q); q\neq 0.\\
stop(@x)&:-&\neg continue(@x);clock(@x,0).
\end{eqnarray*}
in $\mathcal{P}_4$.
\end{itemize}

\noindent{\bf Step 5: Inflationary result}

Input: $\mathcal{P}_4$. Output: $\mathcal{P}_{NL}$.

$\mathcal{P}_{NL}$ contains rules in $\mathcal{P}_4$ and the following rules:

\begin{itemize}
\item[-] For each relation $R$  in $\mathcal{P}_{4}$ except $start$, $clock$, $continue$, $inf$ and $stop$ but not in $\mathcal{P}_{DL}$, add
\[R(\dots@x\dots):-R(\dots@x\dots);clock(@x,q);q\neq 0.\]
in $\mathcal{P}_{NL}$.
\item[-] For each intensional relation R of $\mathcal{P}_{DL}$, add
\[R(\overrightarrow x):-R(\overrightarrow x).\]
in $\mathcal{P}_{NL}$. \hspace*{5.8cm} \qed
\end{itemize}

It is obvious that each  rule in a  program $\mathcal{P}_{NL}$ produced by the Rewriting Algorithm satisfies the localization restrictions, and can thus be computed effectively on one node. We can now state the main result of this section which shows that the global semantics of $\mathcal{P}_{DL}$ coincides with the distributed semantics of $\mathcal{P}_{NL}$.

\begin{theorem}\label{the-NL}
For a graph {\bf G}=\{V,G\}, a Datalog program $\mathcal{P}_{DL}$ and its rewritten $Netlog$ program $\mathcal{P}_{NL}$ produced by the Rewriting Algorithm, the computation of $\mathcal{P}_{NL}$ on {\bf G} terminates iff the computation of $\mathcal{P}_{DL}$ on {\bf G} terminates, and
$\mathcal{P}_{NL}(G)=\mathcal{P}_{DL}(G)$.
\end{theorem}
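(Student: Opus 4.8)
The plan is to follow the five steps of the Rewriting Algorithm and argue that each one preserves the semantics, the heart of the matter being that one application of the immediate consequence operator $\Psi_{\mathcal{P}_{DL},{\bf G}}$ of the centralized program is simulated by exactly one \emph{stage} of the distributed program, namely one full descending cycle of the clock from $\kappa$ down to $0$. Throughout I identify a distributed instance with the union $I=\bigcup_{v\in V}I_v$ of its local fragments, and I call an intensional fact $R(\overrightarrow x)$ \emph{correctly placed} if it sits on the node interpreting its holding variable.

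First I would settle the correctness of Step~2. The claim is: for every rule $r\in\mathcal{P}_1$, every choice of holding variable $h$ and set $CN\ni h$, if $\langle T_r,\kappa_r\rangle=Rewrite(r,h,CN)$, then starting from any distributed instance $I$ and any assignment of the $CN$-variables to nodes, running the rules of $T_r$ for $\kappa_r$ rounds — with facts persisting within a stage, as Steps~4--5 arrange — produces on the node interpreting $h$ exactly the $head_r$-instances $head_r[\sigma]$ for the $\sigma$ extending that assignment with $(I,\sigma)\models_{\bf G} body_r$. I would prove this by induction on the recursion depth of $Rewrite$, the base case $S'=S$ (an already local rule, $\kappa_r=1$) being immediate. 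For the step one uses that (i) the partition of $S''$ into $\approx$-classes makes the non-local components $S''_i$ pairwise share no variable beyond $CN_r$, so their conjunction decomposes $body_r$ faithfully; (ii) the interface tuple $\overrightarrow{y_i}$ carries exactly the variables of $S''_i$ visible to $S'$ or $head_r$; (iii) by the induction hypothesis each sub-query relation $Q_i$ is computed correctly on node $h_{r_i}$ within $\kappa_{r_i}$ rounds, and is then transported to $h_r$ in one further round when $h_{r_i}$ is a neighbour of $h_r$ (case $T_i\neq\emptyset$, where the rule carries $G(@h_{r_i},h_r)$ and is pushed after Step~3), or flooded to every node — hence in particular to $h_r$ — in $\Delta$ further rounds when $S''_i$ is disconnected (case $T_i=\emptyset$, via $r'_i$ and the diameter bound); so that (iv) all $Q_i$ reach $h_r$ by round $\max_i(\kappa_{r_i}+d_{r_i})=\kappa_r$, and rule $r'$ reconstructs $head_r$. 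Since $\kappa=\max(\Delta,\max_{r\in\mathcal{P}_1}\kappa_r)$, the same $\kappa$ rounds suffice for all rules of $\mathcal{P}_2$ at once, so after $\kappa$ rounds the intensional relations computed by $\mathcal{P}_2$ hold, correctly placed, exactly $\Psi_{\mathcal{P}_{DL},{\bf G}}(I)$.

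Next I would analyse the control machinery of Steps~3--5. Let $(I_i)_{i\ge 0}$ be the centralized computation sequence ($I_0=\emptyset$, $I_{i+1}=I_i\cup\Psi_{\mathcal{P}_{DL},{\bf G}}(I_i)$). The key invariant is: at the start of stage $i$ — the round in which $clock$ is reset to $\kappa$ on every node — the relations $R$ for intensional symbols of $\mathcal{P}_{DL}$ store, correctly placed, exactly $I_i$. Granting it for stage $i$: during the $\kappa$ rounds with $clock\neq 0$ the clock-guarded rules of $\mathcal{P}_2$ compute $tempR=\Psi_{\mathcal{P}_{DL},{\bf G}}(I_i)$ by the previous paragraph, while the auxiliary relations persist only within the stage and hence restart fresh next time; at $clock=0$ the rules $R(\overrightarrow x):-tempR(\overrightarrow x);clock(@x,0)$ together with $R(\overrightarrow x):-R(\overrightarrow x)$ commit $R\leftarrow R\cup tempR=I_{i+1}$, which is the invariant for stage $i+1$. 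It then remains to check clock synchronisation and the termination test. A node emits a new tuple at $clock=0$ of stage $i$ iff some $tempR\setminus R$ is nonempty there, iff $\Psi(I_i)\not\subseteq I_i$; in that case it sets $continue$ and broadcasts $inf$, and since $\kappa\ge\Delta$ the $inf$ flood reaches every node during stage $i+1$, so every node holds $continue$ through the end of stage $i+1$ — the producing node itself keeping $continue$ by persistence across the stage following its production — whence all nodes reset their clocks in lockstep whenever progress is made anywhere. Conversely, the first time two consecutive stages make no progress, which is exactly one stage after $(I_i)$ stabilises at its fixpoint $I_{i^*}$, there is no $inf$ in flight and no node has $continue$ at $clock=0$, so every node simultaneously derives $stop$, the countdown rules die, all control facts disappear, and the rules $R(\overrightarrow x):-R(\overrightarrow x)$ freeze the intensional relations at $I_{i^*}$ — a fixpoint of $\Psi_{\mathcal{P}_{NL},{\bf G}}$. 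Hence $\mathcal{P}_{NL}$ terminates iff $(I_i)$ converges, i.e. iff $\mathcal{P}_{DL}$ terminates, and then $\mathcal{P}_{NL}(G)=\bigcup_v(I_{i^*})_v=\mathcal{P}_{DL}(G)$.

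The hard part will be this last step. Under the non-inflationary $Netlog$ semantics the bookkeeping with $tempR$, $continue$, $inf$ and $stop$ is delicate, and one must verify both that the clocks never desynchronise while progress is still possible and that any transient desynchronisation that can occur — for instance a node that produced the sole new tuple of a stage halting one stage ahead of the others — is harmless, since it can happen only after the fixpoint has already been committed, so the extra stages run by the remaining nodes produce nothing. Establishing the stage invariant thus calls for a simultaneous induction on $i$ that tracks, round by round, the clock value, the $tempR$/$R$ split and the sets of nodes currently holding $continue$ or $inf$; by comparison Steps~1 and~3 and the decomposition underlying Step~2 are routine.
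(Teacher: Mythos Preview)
Your proposal is correct and follows essentially the same route as the paper: both arguments hinge on the stage correspondence $Q(\overrightarrow c)\in I_{DL,i}\Leftrightarrow Q(\overrightarrow c)\in I_{NL,\,i(\kappa+1)+1}$, proved by induction on~$i$. The paper differs only in packaging: it isolates the round-by-round bookkeeping you describe into a separate technical lemma (Lemma~\ref{pro-NL}) recording that holding nodes are respected, that $clock$ and $stop$ are globally uniform, and that the auxiliary relations live only within a stage, and then inlines the $Rewrite$ recursion analysis inside the induction step rather than proving it beforehand as you do. Your modular order (correctness of $Rewrite$ by induction on its recursion depth, then the stage invariant, then termination) is arguably cleaner. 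One small correction: the transient desynchronisation you worry about does not occur. Because $\kappa\ge\Delta$, any $\mathit{inf}$ emitted at the end of a stage reaches every node during the next stage, and the producing node itself re-derives $continue$ at the next round from the same $tempR\setminus R$ witness (which is still present at $clock=0$ since $R$ is only updated in the following round); hence whenever any node continues, all nodes continue, and Lemma~\ref{pro-NL}.3 records exactly this global uniformity of $clock$ and $stop$. So the ``hard part'' reduces to the straight bookkeeping induction you outline, with no asymmetric halting to handle.
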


$\mathcal{P}_{NL}$ slows down the computation of $\mathcal{P}_{DL}$. During one stage ($\kappa$ rounds) of the computation of $\mathcal{P}_{NL}$, the clock turns from $\kappa$ to $0$, the sub-queries are evaluated and the sub-results are transmitted. At the end of each stage, the deduced facts for the intensional relations of $\mathcal{P}_{DL}$ are cumulated and all the sub-results are cleared. Hence, one such stage of $\mathcal{P}_{NL}$ is equivalent to one stage of $\mathcal{P}_{DL}$. For an intensional relation $R$ of $\mathcal{P}_{DL}$, $R(\overrightarrow c)\in I_{DLi}$ if and only if $R(\overrightarrow c)\in I_{NLi(\kappa+1)+1}$, $i\geq 0$, where $I_{DLi}$ and $I_{NLi}$ are the stages of respectively the fixpoints of $\mathcal{P}_{DL}$ and $\mathcal{P}_{NL}$.

The termination of the computation of $\mathcal{P}_{NL}$ is ensured by the predicate $stop$ as follows: the computation starts with a fact $start(a)$ on each node $a$, which triggers $clock(a,\kappa)$, $continue(a)$ and $inf(b,a)$ where $b$ is a neighbor of $a$. When the clock decreases from $\kappa$ to $0$, the evaluation of the sub-queries is done. The facts of an intensional relation $R$ of $\mathcal{P}_{DL}$ are stored in $tempR$. Meanwhile, $inf(v,a)$ is pushed to all the other nodes $v$ to inform that the computation on $a$ continues, so that $continue(v)$ is deduced. $continue(a)$ for one stage is maintained to the end of the stage. When the clock turns to zero, (i) the program checks if $continue(a)$ is true. If false, $stop(a)$ is deduced. Since $\neg stop(a)$ is a precondition for decreasing the clock and the clock is a precondition for deducing facts of all the other relations except $R$, so only the facts of $R$ are preserved along the stages. Thus the fixpoint is obtained and the computation terminates. Otherwise ($stop(a)$ is not deduced), the computation continues.  (ii) The programs compares facts of $tempR$ and $R$. If there are newly deduced facts, these facts are added into $R$. Meanwhile $continue(a)$ and $inf(b,a)$ are deduced for the next stage.

The proof of Theorem~\ref{the-NL} relies on the following Lemma and the fact that the Rewriting Algorithm produces only rules satisfying the localization restrictions.

\begin{lemma}\label{pro-NL}
For a graph {\bf G}=\{V,G\}, a Datalog program $\mathcal{P}_{DL}$ and its rewritten $Netlog$ program $\mathcal{P}_{NL}$ produced by the Rewriting Algorithm, the computation sequence $(I_{NLj})_{j\geq 0}$ for $\mathcal{P}_{NL}$ satisfies:
\begin{enumerate}
\item For each relation $R$ in $\mathcal{P}_{NL}$, $R(\overrightarrow c)\in I_{NLp}$ if\!f $R(\overrightarrow c)\in I_{NLp,c_1}$ and $R(\overrightarrow c)\notin I_{NLp,c'}$, where $c_1$ is the holding node of $R(\overrightarrow c)$ and $c'\neq c_1$.
\item $I_{NL0}=\{start(v)|v\in V\}$.
\item If $clock(a,c)\in I_{NLp}$, then $clock(v,c)\in I_{NLp}$ for all $v\in V$. If $stop(a)\in I_{NLp}$, then $stop(v)\in I_{NLp}$ for all $v\in V$.
\item If $stop(a)\in I_{NLs}$, then (i) $clock(a,\kappa)\in I_{NLs}$, (ii) for $q\in[1,s]$, $clock(a,\kappa-p)\in I_{NLq}$, $p\in[0,\kappa]$, if\!f $q=n(\kappa+1)+p+1$ and (iii) if $R(\overrightarrow c)\in I_{NLf}$ where $f>s+1$, then $R$ is an intensional relation of $\mathcal{P}_{DL}$. $Continue(a)\notin I_{NLn(k+)}$ for any $a\in V$ and any $p\geq s-(\kappa+1)$ if\!f $stop(a)\in I_{NLs}$. (iiii) $continue(a)\notin I_{NLs}$.
\item For each relation $R$ in $\mathcal{P}_{NL}$ but not in $\mathcal{P}_{DL}$, except the relations $start$, $clock$, $continue$, $inf$ and $stop$, (i) if $R(\overrightarrow c)\in I_{NLp}$, then $clock(a,\kappa)\notin I_{NLp}$, and (ii) if $p=n(\kappa+1)+q$, $q\in[2,\kappa+1]$, then $R(\overrightarrow c)\in I_{NLn(\kappa+1)+q'}$, $q'\in[q,\kappa+1]$.
\item For each intensional relation $R$ of $\mathcal{P}_{DL}$, if $R(\overrightarrow c)\in I_{NLp}$ then $R(\overrightarrow c)\in I_{NLp'}$ where $p'\geq p$. Assume $q=min\{p|R(\overrightarrow c)\in I_{NLp}\}$, then $clock(a,\kappa)\in I_{NLq}$.
\end{enumerate}
\end{lemma}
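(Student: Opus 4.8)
The plan is to prove all six items of Lemma~\ref{pro-NL} simultaneously, by induction on the index $p$ of the computation sequence $(I_{NLj})_{j\geq0}$ of $\mathcal{P}_{NL}$, carrying statements~1--6 as a joint invariant. The base case is the initialization fixed by the Rewriting Algorithm: the computation runs from $I_{NL0}=\{start(v)\mid v\in V\}$ (item~2), and at the first round the rules fed by $start$ put $clock(v,\kappa)$, $continue(v)$ and a first batch of $inf$-facts on every node at once. For the inductive step $p\to p+1$ I would split on the clock value present in $I_{NLp}$: (a) clock $=\kappa$, the beginning of a stage, where only the reset rule, the unconditional copy rules for intensional relations of $\mathcal{P}_{DL}$, and $continue$/$inf$ maintenance have fired; (b) clock $\in[1,\kappa-1]$, the body of a stage, where the rules inherited from $\mathcal{P}_3$ (all guarded in Step~4 by ``$clock(@x,q);q\neq0$'') evaluate the sub-queries $Q_i$ and fill the $tempR$ relations; (c) clock $=0$, the end of a stage, where the promotion rule copies $tempR$ into $R$, $continue$/$inf$ are refreshed or not, and $stop$ is possibly deduced; (d) no $clock$ fact present, after termination, where only ``$R(\overrightarrow x):-R(\overrightarrow x)$'' for intensional $R$ of $\mathcal{P}_{DL}$ is still enabled.

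Several items then fall out almost mechanically. Item~1 is independent of the clock: by the localization restrictions (i)--(iii) that the Rewriting Algorithm guarantees on every rule of $\mathcal{P}_{NL}$, a non-$\uparrow$ rule deposits its head on the evaluating node, which is the head's holding node, and a $\uparrow$ rule has $G(@y,x)$ in its body with $x$ the holding variable of the head, so the head is pushed to the unique neighbour that is its holding node; an easy induction then gives that every derived fact sits on, and only on, its holding node. Item~3 follows because $start$ reaches all nodes at round~$0$ and the clock-decrement and clock-reset rules consult only $clock(@x,\cdot)$ and $stop(@x)$ locally, so clocks never drift apart, and likewise $stop$. Item~6 follows because the intensional relations of $\mathcal{P}_{DL}$ carry the unconditional copy rule (hence are non-decreasing) and are only ever enlarged by ``$R(\overrightarrow x):-tempR(\overrightarrow x);clock(@x,0)$'', whose newly produced facts land in $I_{NLp+1}$ alongside the $clock(a,\kappa)$ generated by the reset rule. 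Item~5 follows because every auxiliary relation of $\mathcal{P}_{NL}$ outside the bookkeeping set (including $tempR$ and the $Q_i$) carries the ``$clock\neq0$''-guarded copy rule from Step~5, so its facts persist through a stage --- including the $clock=0$ round, reached by one more application of that copy rule --- but are flushed at the next $clock=0\to\kappa$ transition, hence are absent exactly when $clock(a,\kappa)$ holds. The round-index arithmetic in items~4--6, in particular ``$clock(a,\kappa-p)\in I_{NLq}\iff q=n(\kappa+1)+p+1$'', is then a direct bookkeeping of the fact that a stage lasts $\kappa+1$ rounds.

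The delicate part, and the heart of the argument, is the $continue$/$stop$ portion of items~3 and~4: I must show that at a $clock=0$ round either every node deduces $stop$ or none does, so that termination is decided globally and synchronously. This is where the calibration of the stage length $\kappa$ against the diameter $\Delta$, together with the ``$q\geq\Delta$'' deadline on the $inf$-forwarding rule, is used. The argument is a nested induction, inside one stage, on graph distance: a node that has a reason to continue --- a fresh $tempR$-fact not yet in $R$, or a received $inf$ --- emits $inf$ tagged with its own identity, which the forwarding rule advances one hop per round while the clock is still above the threshold; the clock length and the deadline are chosen so that this wave reaches every node before the clock next falls to $0$, hence before any node consults its $continue$ flag to decide on $stop$. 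Thus if some node continues, every node re-derives $continue$ within the stage and no $stop$ is produced; and if no node continues, no $inf$ is generated during the stage and every node derives $stop$ at the $clock=0$ round, in the same round. Getting this precise --- tracking the exact round at which the $inf$-wave has reached the distance-$d$ nodes, checking the clock is still above the threshold there, and then feeding the conclusion back into item~4's bookkeeping --- is what I expect to be the main obstacle.

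The remaining loose end is item~4(iii), the post-termination picture, which is then routine: once $stop(a)$ is in (synchronously, by item~3) the clock-decrement rule is permanently disabled by its ``$\neg stop$'' guard, so no $clock$ fact survives past one further round; once $clock$ is gone, every rule except the unconditional copy rules for intensional relations of $\mathcal{P}_{DL}$ is disabled, so from then on only those facts remain, frozen. The ``$f>s+1$'' slack in the statement, rather than ``$f>s$'', is exactly the one-round delay caused by the $clock(a,\kappa)$ value still sitting in $I_{NLs}$: it licenses one last, immediately dying, round of sub-query derivations before the clock disappears. I would also keep in mind throughout that Lemma~\ref{pro-NL} is engineered to feed Theorem~\ref{the-NL}: items~5--6 and the stage arithmetic are precisely what is needed to match one application of the inflationary operator of $\mathcal{P}_{DL}$ with one $(\kappa+1)$-round stage of $\mathcal{P}_{NL}$, and item~4 is what is needed to transfer termination.
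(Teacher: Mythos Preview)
The paper does not prove Lemma~\ref{pro-NL}: immediately after stating it, the text passes to ``Now we prove Theorem~\ref{the-NL}'' and uses the lemma as a black box inside that proof. So there is no paper proof to compare your proposal against; the informal paragraph between Theorem~\ref{the-NL} and the lemma is the only hint the paper gives, and your plan is consistent with it.

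Your approach---joint induction on the round index with a case split on the current clock value, carrying items~1--6 as a simultaneous invariant---is the natural one, and your identification of the $continue$/$inf$/$stop$ synchronization as the crux is correct. Two points to tighten. First, your write-up has a small circularity: you list the $stop$ half of item~3 as routine (``likewise $stop$''), but global agreement on $stop$ is precisely the ``delicate part'' you flag later; the induction should carry only the $clock$ half of item~3 as immediate and defer the $stop$ half to the $inf$-wave argument. Second, when you run that wave argument, note that the forwarding rule's guard is $clock(@x,q);q\geq\Delta$ with $x$ the \emph{origin}, not the forwarding node, so you must invoke the already-established $clock$ synchronization to evaluate it at the forwarder; and do check explicitly that $\kappa\geq\Delta$ together with the threshold $q\geq\Delta$ actually gives the wave enough rounds to traverse diameter $\Delta$ before the clock next reaches~$0$---if that count comes up short, the defect is in the paper's Step~4 rules, not in your proof strategy.
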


Now we prove Theorem~\ref{the-NL}.
\begin{proof}
Assume the computation sequence for $\mathcal{P}_{DL}$ is $(I_{DLi})_{i\geq 0}$ and for $\mathcal{P}_{NL}$ is $(I_{NLj})_{j\geq 0}$.
We prove for any intensional relation $Q$ of $\mathcal{P}_{DL}$, $Q(\overrightarrow c)\in I_{NLi(\kappa+1)+1}$ if\!f $Q(\overrightarrow c)\in I_{DLi}$.

\noindent {\sc Basis}: $i=0$, $I_{DL0}=\emptyset$ and $I_{NL1}=\{continue(a),inf(b,a),clock(a,\kappa)|a\in V,G(a,b)\}$.

\noindent{\sc Induction:} Suppose for $n\geq 0$, and each intensional relation $Q$ of $\mathcal{P}_{DL}$, \[Q(a_1,\dots,a_k)\in I_{DLn} \mbox{ if\!f }Q(a_1,\dots,a_k)\in I_{NLn(\kappa+1)+1}.\]

First we proof that for $n+1$, if $Q(b_1,\dots,b_k)\in I_{DLn+1}$, then $Q(b_1,\dots,b_k)\in I_{NL(n+1)(\kappa+1)+1}$.

If $Q(b_1,\dots,b_k)\in I_{DLn+1}$, then (i) $Q(b_1,\dots,b_k)\in I_{DLn}$ or (ii) $Q(b_1,\dots,b_k)$ is a newly deducted fact in $I_{DLn+1}$.

If $Q(b_1,\dots,b_k)\in I_{DLn}$ then $Q(b_1,\dots,b_k)\in I_{NLn(\kappa+1)+1}$ by the induction hypothesis, and $Q(b_1,\dots,b_k)\in I_{NLp}$ where $p\geq n(\kappa+1)+1$ by Lemma~\ref{pro-NL}.6, therefore $Q(b_1,\dots,b_k)\in I_{NL(n+1)(\kappa+1)+1}$.

Otherwise($Q(b_1,\dots,b_k)\notin I_{DLn}$), then there is one rule $r\in \mathcal{P}_{DL}$
\[\begin{array}{ll}r:Q(x_1,\dots,x_k):-R_1(\overrightarrow {y_1});\dots;R_m(\overrightarrow {y_m}); \neg R_{m+1}(\overrightarrow {y_{m+1}});\dots;\neg R_l(\overrightarrow {y_l}).\end{array}\]
and an instantiation $\sigma$ of the variables in $r$ such that $\sigma(x_i)=b_i$ for $i\in[1,k]$
\[
R_i(\sigma(\overrightarrow {y_i}))\left\{\begin{array}{ll}\in I_{DLn}\cup G, \mbox{ for }i\in[1,m]\\\notin I_{DLn}\cup G, \mbox{ for }i\in[m+1,l]\end{array}\right.
\]
and for some $e\in[1,m]$, $R_e(\sigma(\overrightarrow {y_e}))\notin I_{DLn-1}\cup G$.
By the induction hypothesis and Lemma~\ref{pro-NL}.1,
\[
R_i(\sigma(\overrightarrow {y_i}))\left\{\begin{array}{ll}\in I_{NLn(\kappa+1)+1,\sigma (hv_{R_i})}\cup G, \mbox{ for }i\in[1,m]\\\notin I_{NLn(\kappa+1)+1,\sigma (hv_{R_i})}\cup G, \mbox{ for }i\in[m+1,l]\end{array}\right.
\]
and $R_e(\sigma(\overrightarrow {y_e}))\notin I_{NL(n-1)(\kappa+1)+1,\sigma (hv_{R_i})}\cup G$. According to Lemma~\ref{pro-NL}.6

\[
R_i(\sigma(\overrightarrow {y_i}))\left\{\begin{array}{ll}\in I_{NLp,\sigma (hv_{R_i})}\cup G, \mbox{ for }i\in[1,m]\\\notin I_{NLp,\sigma (hv_{R_i})}\cup G, \mbox{ for }i\in[m+1,l]\end{array}\right.
\]
where $p\in [n(\kappa+1)+1,(n+1)(\kappa+1)]$ and $R_e(\sigma(\overrightarrow {y_e}))$ is newly deduced in $I_{NLn(\kappa+1)+1}$. So $continue(\sigma(hv_{R_e}))\in I_{NLn(\kappa+1)+1}$. By Lemma~\ref{pro-NL}.4, $stop(a)\notin I_{NLn(\kappa+1)+1}$ and $clock(a,\kappa-p)\in I_{NLn(\kappa+1)+1+p}$ for $p\in[0,\kappa]$ and for any $a\in V$.

Because
\[Q(@x_1,\dots,x_k):-tempQ(@x_1,\dots,x_k);clock(@x_1,0).\]
is in $\mathcal{P}_{NL}$, so if $tempQ(b_1,\dots,b_k)\in n(\kappa+1)+1+p$, $p\in[\kappa_r,\kappa]$, then $Q(b_1,\dots,b_k)\in I_{NL(n+1)(\kappa+1)+1}$ since $\kappa_r\leq \kappa$.

According to Rewriting Algorithm, $h_r=hv_Q$, $CN_r=\{h_r\}$ and
\begin{itemize}
\item
if all the holding variables of the literals in $body_r$ are the same with $h_r$ ($S'=S$), then
\[\begin{array}{ll}tempQ(@x_1,\dots,x_k):-R_1(\overrightarrow {y_1});\dots;R_m(\overrightarrow {y_m}); \neg R_{m+1}(\overrightarrow {y_{m+1}});\dots;\neg R_l(\overrightarrow {y_l}); clock(@x_1,q);q\neq 0.\end{array}\]and
\[\begin{array}{ll}tempQ(@x_1,\dots,x_k):-tempQ(@x_1,\dots,x_k); clock(@x_1,q);q\neq 0.\end{array}\]

are in $\mathcal{P}_{NL}$. $\kappa_r=1$. Therefore $tempQ(b_1,\dots,b_k)\in I_{NLn(\kappa+1)+1+p}$ for each $p\in[1,\kappa]$, and $Q(b_1,\dots,b_k)\in I_{NL(n+1)(\kappa+1)+1}$ by Lemma~\ref{pro-NL}.1 and ~\ref{pro-NL}.5.
\item
Otherwise, not all of the holding variables of the literals in $body_r$ are the same with $h_r$ ($S'\neq S$). Assume $hv_{R_1}=\dots=hv_{R_{w}}=hv_{R_{m+1}}=\dots=hv_{R_{m+u}}=h_r$. Then
\[\begin{array}{ll}tempQ(@x_1,\dots,x_k):-R_1(\overrightarrow {y_1});\dots;R_w(\overrightarrow {y_w}); \neg R_{m+1}(\overrightarrow {y_{m+1}});\dots;\neg R_{m+u}(\overrightarrow {y_{m+u}});\\
\hspace*{4cm}Q_1(\overrightarrow {z_1});\dots;Q_t(\overrightarrow {z_t}); clock(@x_1,q);q\neq 0.\end{array}\]and
\[\begin{array}{ll}tempQ(@x_1,\dots,x_k):-tempQ(@x_1,\dots,x_k); clock(@x_1,q);q\neq 0.\end{array}\]
are in $\mathcal{P}_{NL}$ where $Q_i(\overrightarrow{z_i})$ is in $head_{r_i}$ for $r_i\in \mathcal{P}_{NL}$. If for each $i\in[1,t]$, $Q_i(\overrightarrow {c_i})\in I_{NL}n(\kappa+1)+1+(\kappa_r-1)$, where $\overrightarrow{c_i}=\sigma(\overrightarrow {z_i})$, then $tempQ(b_1,\dots,b_k)\in I_{NL}n(\kappa+1)+1+\kappa_r$, then $tempQ(b_1,\dots,b_k)\in I_{NL}n(\kappa+1)+1+p$, $p\in[\kappa_r,\kappa]$. $r_i$ is as follows:

Literals $R_{w+1}(\overrightarrow {y_{w+1}})$, $\dots$, $R_{m}(\overrightarrow {y_{m}})$, $\neg R_{m+u+1}(\overrightarrow {y_{m+u+1}})$, $\dots$, $\neg R_{l}(\overrightarrow {y_{l}})$ are grouped into subsets $S''_1,\dots,S''_n$, such that he literals in different subsets have no common variables except the variable in $CN_r$ which is $x_1$.

For each $S''_i$,
\begin{itemize}
\item
if some of the holding variables of the literals in $S''_i$ are the neighbors of $h_r$, ($T_i\neq \emptyset$), then  $G(@hv_{\gamma_{iu}},h_r)$ where $hv_{\gamma_{iu}}$ is one of such variables, is added into $S''_i$. Then $h_{r_i}=hv_{\gamma_{iu}}$ and $CN_{r_i}=CN_r\cup \{h_{r_i}\}$. Literals in $S''_i$ along with ''$clock(@hv_{\gamma_{iu}},q)$'', ''$q\neq 0$'' constitute
$body_{r_i}$. ''$\uparrow Q_i(\overrightarrow {z_i})$'' constitute $head_{r_i}$ where $\overrightarrow {z_i}$ contains all the variables both in $body_{r_i}$ and in any of $R_{1}(\overrightarrow {y_1})$, $\dots$, $R_{w}(\overrightarrow {y_w})$, $\neg R_{m+1}(\overrightarrow {y_{m+1}})$, $\dots$,  $\neg R_{m+u}(\overrightarrow {y_{m+u}})$ or $head_r$, with $h_r$ as the holding node. If the evaluation for $r_i$ is finished, the result for the sub-query $Q_i$ gets to $\sigma(h_r)$ in the next round. $d_{r_i}=1$.

\item
Otherwise (non of the holding variables of the literals in $S''_i$ are the neighbors of $h_r$), ''$y=hv_{\gamma_{it}}$'' is added into $S''_i$ where $\gamma_{it}$ is one literal in $S''_i$ and $y$ does not occurs in $r$. $h_{r_i}=hv_{\gamma_{it}}$. $CN_{r_i}=CN_r\cup \{h_{r_i}\}$. Literals in $S''_i$ along with ''$clock(@y,q)$'', ''$y\neq 0$'' constitute
$body_{r_i}$. $head_{r_i}$ is ''$\uparrow Q_i(\overrightarrow {z_i})$'' where $\overrightarrow {z_i}$ contains all the variables both in $body_{r_i}$ and in any of $R_{1}(\overrightarrow {y_1})$, $\dots$, $R_{w}(\overrightarrow {y_w})$, $\neg R_{m+1}(\overrightarrow {y_{m+1}})$, $\dots$, $\neg R_{m+u}(\overrightarrow {y_{m+u}})$ or $head_r$, with $y$ as holding node.

Moreover, because the following rule is in $\mathcal{P}_{NL}$
\[\begin{array}{ll} \uparrow Q_i(@x\dots):-Q_i(@y\dots); G(@y,x);clock(@y,q);q\neq 0.\end{array}\]
therefore if the evaluation of $r_i$ is finished, then the result for the sub-query $Q_i$ is obtained locally in the next round and then is broadcast to every node in $\Delta$ rounds. $d_{r_i}=1+\Delta$.
\end{itemize}
For each $i\in[1,t]$, if $Q_i(\sigma(h_{r_i})\dots)\in I_{NLn(\kappa+1)+1+\kappa_{r_i}}$, then $Q_i(\overrightarrow {c_i})\in I_{NLn(\kappa+1)+1+(\kappa_{r_i}+d_{r_i})-1}$, and because $\kappa_r=max\{\kappa_{r_i}+d_{r_i}\}$, so $Q_i(\overrightarrow {c_i})\in I_{NLn(\kappa+1)+1+(p'-1)}$ for $p'\in[\kappa_{r_i}+d_{r_i},\kappa_r]$.

Each $r_i$ is then rewritten by $Rewrite$ function and the output rules are modified by the Rewriting Algorithm.
\end{itemize}
A set of rules $T_r\in\mathcal{P}_{NL}$ is obtained by applying the Rewriting Algorithm on $r$. For $r'\in T_r$ with some sub-queries \[\kappa_{r'}=max\{\kappa_{r'_i}+d_{r'_i}|r'\in T_r\mbox{ and }r'_i\mbox{ is a sub-query of }r'\},\] and for $r''\in T_r$ without sub-queries $\kappa_{r''}=1$. The answers to $r\in T_r$ is in $I_{NLn(\kappa+1)+1+\kappa_{r}}$. Therefore $Q_i(\sigma(h_{r_i})\dots)\in I_{NLn(\kappa+1)+1+\kappa_{r_i}}$ for each $i\in[1,t]$, so finally $Q(b_1,\dots,b_k)\in I_{NL(n+1)(k+1)+1}$.


Then we proof that if $Q(b_1,\dots,b_k)\in I_{NL(n+1)(\kappa+1)+1}$ then $Q(b_1,\dots,b_k)\in I_{DLn+1}$ for $n+1$.

If $Q(b_1,\dots,b_k)\in I_{NL(n+1)(\kappa+1)+1}$, then

(i) $Q(b_1,\dots,b_k)\in I_{NL(n+1)(\kappa+1)}$ or

(iii) $tempQ(b_1,\dots,b_k)\in I_{NL(n+1)(\kappa+1)}$ and $clock(b_1,0)\in I_{NL(n+1)(\kappa+1)}$.

If $Q(b_1,\dots,b_k)\in I_{NL(n+1)(\kappa+1)}$, according to Lemma~\ref{pro-NL}.6, $Q(b_1,\dots,b_k)\in I_{NLn(\kappa+1)+1}$. By the induction hypothesis, $Q(b_1,\dots,b_k)\in I_{DLn}$, so $Q(b_1,\dots,b_k)\in I_{DLn+1}$.

Otherwise ($Q(b_1,\dots,b_k)\notin I_{NL(n+1)(\kappa+1)}$),
\[Q(@x\dots):-tempQ(@x\dots);clock(@x,0)\]
is in $\mathcal{P}_{NL}$, $tempQ(b_1,\dots,b_k)\in I_{NL(n+1)(\kappa+1)}$, $clock(b_1,0)\in I_{NL(n+1)(\kappa+1)}$. Therefore $stop(a)\notin I_{NL(n+1)(\kappa+1)+1}$, and by Lemma~\ref{pro-NL}.4, $clock(a,\kappa-p)\in I_{NLq}$, $p\in[0,\kappa]$ if\!f $q=n(\kappa+1)+1+p$.
A set of rules in $\mathcal{P}_{NL}$ of the following form, with $Q_i(\overrightarrow {x_i})$, $Q_{i1}(\overrightarrow{z_{i1}})$, $\dots$, $Q_{io}(\overrightarrow{z_{io}})$ as sub-queries, is used and only used for deducing $tempQ(b_1,\dots,b_k)$
\[\begin{array}{ll}(\uparrow) Q_i(\overrightarrow {x_i}):-R_{i1}(\overrightarrow {y_{i1}});\dots;R_{im}(\overrightarrow {y_{im}}); \neg R_{im+1}(\overrightarrow {y_{im+1}});\dots;\neg R_{il}(\overrightarrow {y_{il}});\\
\hspace*{2.5cm}Q_{i1}(\overrightarrow{z_{i1}});\dots;Q_{io}(\overrightarrow{z_{io}}); clock(@y,q);q\neq 0.\end{array}\]
and $tempQ(b_1,\dots,b_k)\in I_{NLp}$, $p\in[p',(n+1)(\kappa+1)]$ for some $p'\in[2,(n+1)(\kappa+1)]$.

According to Rewriting Algorithm, all of these rules are rewritten from a rule in $\mathcal{P}_{DL}$ with $Q(x_1,\dots,x_k)$ as the head and the literals $R_{t}(\overrightarrow {y_t})$ and $\neg R_{u}(\overrightarrow {y_u})$ occurring in these rules as the body.
Assume the rule is
\[\begin{array}{ll}r:Q(x_1,\dots,x_k):-R_1(\overrightarrow {y_1});\dots;R_m(\overrightarrow {y_m}); \neg R_{m+1}(\overrightarrow {y_{m+1}});\dots;\neg R_l(\overrightarrow {y_l}).\end{array}\]
By Lemma~\ref{pro-NL}.6,
\[
R_i(\sigma(\overrightarrow {y_i}))\left\{\begin{array}{ll}\in I_{NLn(\kappa+1)+1}\cup G, \mbox{ for }i\in[1,m]\\\notin I_{NLn(\kappa+1)+1}\cup G, \mbox{ for }i\in[m+1,l]\end{array}\right.
\]
where $\sigma(x_i)=b_i$ for $i\in[1,k]$, and for some $e\in[1,m]$, $R_e(\sigma(\overrightarrow {y_e}))\notin I_{NL(n-1)(\kappa+1)+1}\cup G$. By the induction hypothesis
\[
R_i(\sigma(\overrightarrow {y_i}))\left\{\begin{array}{ll}\in I_{DLn}\cup G, \mbox{ for }i\in[1,m]\\\notin I_{DLn}\cup G, \mbox{ for }i\in[m+1,l]\end{array}\right.
\]
and $R_e(\sigma(\overrightarrow {y_e}))\notin I_{DLn-1}\cup G$. So $Q(b_1,\dots,b_k)\in I_{DLn+1}$.

Therefore for an intensional relation $Q$ of $\mathcal{P}_{DL}$, $Q(\overrightarrow c)\in I_{DLi}$ if and only if $Q(\overrightarrow c)\in I_{NLi(\kappa+1)+1}$.

We now proof that the computation of $\mathcal{P}_{NL}$ on {\bf G} terminates iff the computation of $\mathcal{P}_{DL}$ on {\bf G} terminates.

The computation of $\mathcal{P}_{DL}$ on {\bf G} terminates,\\
if\!f\\$(I_{DLj})_{j\geq 0}$ converges,\\
if\!f\\no new facts in any intensional relation of $\mathcal{P}_{DL}$ are deduced in $I_{DLi}$ for the minimal $i$,\\
if\!f\\no new facts in any intensional relation of $\mathcal{P}_{DL}$ are deduced in $I_{NLi(\kappa+1)+1}$ for the minimal $i$,\\
if\!f\\$continue(v)\notin I_{NL(i+1)(\kappa+1)}$ and $continue(v)\in I_{NLi(\kappa+1)}$,\\
if\!f\\ $stop(v)\in I_{NL(i+1)(\kappa+1)+1}$,\\
if\!f\\ $clock(v,c)\notin I_{NL(i+1)(\kappa+1)+2}$,\\
if\!f\\ only the facts in the intensional relations of $\mathcal{P}_{DL}$ are in $I_{NLp}$, $p> (i+1)(\kappa+1)+2$,\\
if\!f\\ $(I_{NLj})_{j\geq 0}$ converges,\\
if\!f\\ the computation of $\mathcal{P}_{NL}$ on {\bf G} terminates.

Therefore the computation of $\mathcal{P}_{NL}$ on {\bf G} terminates iff the computation of $\mathcal{P}_{DL}$ on {\bf G} terminates
and $\mathcal{P}_{NL}(G)=\mathcal{P}_{DL}(G)$.
\end{proof}


\bigskip

\section{Restriction to neighborhood}\label{sec-locfrag}

We next consider a restriction of FO and FP to bounded neighborhoods
of nodes which ensures that the distributed computation can be
performed with only a bounded number of messages per node.

Let $dist(x,y) \le k$ be the
first-order formula stating that the distance between $x$ and $y$ in the graph is
no more than $k$. Let ${\cal N}^k(x)=\{y|dist(x,y)\le k\}$ denote
the {\it $k$-neighborhood} of $x$.

Let $\varphi(x,\overrightarrow{y})$ be an FO formula with free variables
$x,\overrightarrow{y}$, then $\varphi^{(k)}(x, \overrightarrow{y})$ denotes the formula
with all the variables occurring in $\varphi$ relativized to the
$k$-neighborhood of $x$, that is each quantifier $\forall / \exists
z$ is replaced by $\forall / \exists z\in {\cal N}^k(x)$, and $y\in
{\cal N}^k(x)$ is added for each free variable $y$.

The local fragments of FO and FP can be defined as follows.


\begin{definition}
\lfo\/ is the set of FO formulae of the form $\varphi^{(k)}(x,
\overrightarrow{y})$.
\end{definition}


\noindent The local fragment of FP can be defined as fixpoint of
\lfo\/ formulae.


\begin{definition}
\lfp\/ is the set of FP formulae of the form
$\mu(\varphi^{(k)}(T)(x, \overrightarrow{y}))$, where $\overrightarrow{y}=y_1\dots
y_{\ell}$ and $T$ is of arity $\ell+1$.
\end{definition}


Consider again the examples of Section~\ref{sec-ql}.
It is easy to verify that the formula $\mu (\varphi(T)(x,h,d))$ defining the OLSR like
table-based routing is not in \lfp.
On the other hand, the formula $\mu(\varphi(ST)(x,y))$  defining the spanning tree is in \lfp, as well as
the  formulae
$\mu(\varphi(RouteReq)(x,y,d))$ and $\mu(\varphi(NextHop)(x,y,d))$ defining
the AODV like On-Demand Routing.

\subsection{Distributed complexity}

We now show that the distributed computation of the local fragments,
$FO_{loc}$ and $FP_{loc}$, can be done very efficiently.
We assume that the nodes are equipped with ports for each of their
neighbors. The ports allow to bound the message size to a constant
independent of the network size. The proof relies as previously on
specific query engines for $FO_{loc}$ and $FP_{loc}$. The query
engine for $FO_{loc}$ works both in synchronous and asynchronous
systems.

\medskip
\noindent \textbf{Query engine for \lfo\/}
($\mathcal{QE}_{FO_{loc}}$) The requesting node broadcasts the
\lfo\/ formula $\varphi^{(k)}(x,\overrightarrow{y})$. For each node $a$, when
it receives the query $\varphi^{(k)}(x,\overrightarrow{y})$, it collects the
topology information of its $k$-neighborhood by sending messages of
$O(1)$ size, then evaluates $\varphi^{(k)}(a,\overrightarrow{y})$ (where $x$ is
instantiated by $a$) by in-node computation. Since all nodes collect
their $k-$neighborhood topology information concurrently, these
computations may interfere with each other. To avoid the interferences
between concurrent local computations of different nodes, the traces
of traversed ports are incorporated in all messages.

Each node collects the topology
of its $k$-neighborhood as follows.

\begin{itemize}
\item For each node $a$, when it receives the
query $\varphi^{(k)}(x,\overrightarrow{y})$, it sends
a message (``collect'', $k$, $j$) to its neighbor though port $j$, and waits for replies.
\item Upon reception of a  message (``collect'', $i$, $j_1...j_{2(k-i)+1}$) by port
$j^\prime$, $a$ adds $j_1...j_{2(k-i)+1}j^\prime$ into a table $tracelist_a$, and
\begin{itemize}
\item if $i > 0$, $a$ sends
on each port $j^{\prime\prime}$ s.t. $j^{\prime\prime} \neq
j^\prime$ the message (``collect'', $i-1$, $j_1 ... j_{2(k-i)+1}
j^\prime j^{\prime\prime}$), and waits for replies;
\item  otherwise($i=0$),
$a$ sends  on port $j^\prime$ the message (``reply'', $j_1j_2 ...
j_{2k+1}$, $j^\prime$, $tracelist_a$).
\end{itemize}
\item Upon reception of a message (''reply'', $j_1\dots j_{2r+1}$, $j_{2r+2}$ $\dots$ $ j_{2k+2}$,
$tracelist'_1\dots tracelist'_{k-r+1}$) on port $j_{2r+1}$, and
replies from all the other ports have been received
\begin{itemize}
\item  if $r=0$, for $1\leq s\leq k+1$, $a$ stores in the local memory ($j_1 .... j_{2s} $, $tracelist^\prime_{s}$);
\item otherwise $a$ sends on port $j_{2r}$ a message (``reply'', $j_1 \dots j_{2r-1}$, $j_{2r}\dots j_{2k+2}$, $tracelist_a$ $tracelist'_1$$\dots$ $tracelist'_{k-r+1}$).
\end{itemize}

\item After receiving replies from all ports, $a$ computes the topology of the
$k$-neighborhood of $a$ by utilizing the stored tuples $(j_1 ...
j_{2r}, tracelist)$as follows:

Let \[\begin{array}{l}\mathcal{T}^k(a):=\{j_1 ... j_{2r} |
(j_1 ... j_{2r},tracelist) \mbox{ is stored in local memory of }a\}.\end{array}\]

Define an equivalence relation $\approx$ on $\mathcal{T}^k(a)$ as
follows: let $j_1 ... j_{2r},j^\prime_1 ... j^\prime_{2s} \in
\mathcal{T}^k(a)$, then $j_1 ... j_{2r} \approx j^\prime_1 ...
j^\prime_{2s}$ if and only if $\exists$ $(j_1 ... j_{2r},
tracelist), (j^\prime_1 ... j^\prime_{2s}, tracelist^\prime)$
s.t. $j_1 ... j_{2r} \in tracelist^\prime$
or $j^\prime_1 ... j^\prime_{2s} \in tracelist$.

The vertex set of the $k$-neighborhood of $a$ is \[\left\{j_1 ...
j_{2r} | j_1 ... j_{2r} \in \mathcal{T}^k(a), r \le k\right\} /
\approx,\] namely equivalence classes $[j_1\cdots j_{2r}]$ of
$\approx$ on elements $j_1 ... j_{2r}$ ($r \le k$) of
$\mathcal{T}^k(a)$.

Let $[j_1 ... j_{2r}],[j^\prime_1 ... j^\prime_{2s}]$ be two
vertices of the $k$-neighbor\-hood of $a$, then there is an edge
between $[j_1 ... j_{2r}]$ and $[j^\prime_1 ... j^\prime_{2s}]$ if
and only if there is $j^\ast_1 j^\ast_2...
j^\ast_{2t+1}j^\ast_{2t+2} \in \mathcal{T}^k(a)$ such that $j^\ast_1
... j^\ast_{2t} \approx j_1 ... j_{2r}$ and $j^\ast_1 ...
j^\ast_{2t+2} \approx j^\prime_1 ... j^\prime_{2s}$. \qed

\end{itemize}


We can now state our main result for \lfo.

\begin{theorem}\label{the-local-fo}
Let ${\bf G}=(V,G)$ be a network with $n$ nodes and diameter $\Delta$.
\lfo\/ formulae $\varphi^{(k)}(x,\overrightarrow{y})$ can be evaluated on
$G$ with the following complexity upper bounds:

\medskip
\begin{tabular}{cccc}
\hs IN-TIME/ROUND   &  DIST-TIME    & MSG-SIZE &  $\#$MSG/NODE \\
  $O(1)$ & $O(\Delta)$ & $O(1)$ & $O(1)$
\end{tabular}
\end{theorem}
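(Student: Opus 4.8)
The plan is to analyze the four complexity measures separately, in each case tracking the behavior of $\mathcal{QE}_{FO_{loc}}$ on a fixed node $a$ together with the interference caused by the concurrent topology-collection performed by other nodes. The key observation, to be established first, is that the $k$-neighborhood collection protocol for a single node $a$ terminates in $O(k)$ rounds and uses $O(1)$ messages on $a$, with each message of $O(1)$ size: the ``collect'' waves reach distance $k$ in $k$ rounds, the ``reply'' waves return in another $k$ rounds, and since the degree is bounded by $D$, the number of messages $a$ sends and receives in each round is bounded by a function of $D$ and $k$ only; the port-trace strings $j_1\dots j_{2(k-i)+1}$ have length at most $2k+2$ over an alphabet of size $\le D$, hence $O(1)$ bits (where the constants depend on $D$ and $k$, which are fixed once the formula is fixed). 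This is the step I expect to require the most care, because the $\mathit{tracelist}$ components appended to ``reply'' messages could a priori grow; one must argue that along any single reply path only $O(k)$ tracelists of size $O(D^{2k})$ each are accumulated, so the total message size is still bounded by a constant independent of $n$.

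Next I would address \textbf{DIST-TIME}. After the requesting node broadcasts $\varphi^{(k)}(x,\overrightarrow{y})$, it takes $O(\Delta)$ rounds for the formula to reach every node. Each node then runs its topology collection, which by the first step costs $O(k)=O(1)$ rounds, and evaluates $\varphi^{(k)}(a,\overrightarrow{y})$ locally. The only subtlety is that different nodes start at different times and their collection waves overlap in the network; but since all messages carry the port-traces identifying which originating node's collection they belong to, a node can route each incoming message independently, so the $O(1)$ per-node collection time is unaffected by interference. Termination is detected locally (a node finishes once it has received replies on all ports), so the total is $O(\Delta)+O(1)=O(\Delta)$.

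For \textbf{MSG-SIZE} I would invoke the bound from the first step: every message is either the initial broadcast of a fixed formula $\varphi^{(k)}$ (a constant-size object, since $\varphi$ is fixed), or a ``collect''/``reply'' message whose payload is a bounded-length sequence of port numbers and bounded-size tracelists, hence $O(1)$ bits — this is precisely why ports rather than node identifiers are used. For \textbf{$\#$MSG/NODE}: node $a$ participates in the collection of only those nodes within distance $k$ of it, of which there are at most $1 + D + \dots + D^k = O(1)$; for each such collection it relays a bounded number of ``collect'' and ``reply'' messages (bounded in terms of $D$ and $k$); and it sends the initial broadcast to its $\le D$ neighbors once. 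Summing gives $O(1)$ messages per node. Finally, for \textbf{IN-TIME/ROUND}, in each round a node processes $O(1)$ incoming messages, each requiring $O(1)$ work (appending to $\mathit{tracelist}_a$, forwarding on $\le D$ ports); and the one-time in-node evaluation of $\varphi^{(k)}(a,\overrightarrow{y})$ over the reconstructed $k$-neighborhood — a graph of $O(1)$ size, queried by a fixed first-order formula — takes $O(1)$ time, which can be charged to the round in which collection completes. Correctness of the reconstruction (that the equivalence classes of $\approx$ on $\mathcal{T}^k(a)$ are exactly the vertices of ${\cal N}^k(a)$ and the edge relation is faithfully recovered) follows by induction on $k$ from the protocol description, and then $\varphi^{(k)}(a,\overrightarrow{y})$ evaluates to the same truth value on this reconstruction as in $G$ by the semantics of relativization to ${\cal N}^k(x)$.
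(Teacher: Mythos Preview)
Your proposal is correct and follows exactly the approach the paper intends: a direct analysis of the query engine $\mathcal{QE}_{FO_{loc}}$ described immediately before the theorem, bounding each of the four complexity measures from the bounded degree $D$ and the fixed locality radius $k$. The paper itself does not spell out a proof beyond the algorithm description and the remark that the $O(\Delta)$ distributed time stems from the initial broadcast while the remaining computation is local; your write-up simply fills in the routine details (bounded-length port traces, bounded-size tracelists, participation of each node in at most $O(D^k)$ concurrent collections) that the paper leaves implicit.
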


Note that the distributed time $O(\Delta)$ comes from the initial
broadcasting of the formula. The computation itself is fully local,
and can be done in $O(1)$ distributed time. In the case of an asynchronous
system, DIST-TIME is bounded by $O(n)$.

We now consider  \lfp\/ which admits the same complexity bounds as
$FO_{loc}$ except for the distributed time. We first assume that the
system is synchronous, and discuss the asynchronous system later.

\noindent
\textbf{Query engine for \lfp} ($\mathcal{QE}_{FP_{loc}}$)\\
\noindent \textbf{Request flooding} The requesting node sets a clock
$\sigma$ of value $\Delta$ and broadcasts the message
$(\mu(\varphi^{(k)}(T)(x,\overrightarrow{y})),\Delta-1)$ to its neighbors. For
each node $a$, if it receives message
$(\mu(\varphi^{(k)}(T)(x,\overrightarrow{y})),c)$ and it haven't set the clock $\sigma$ before, then it sets a clock
$\sigma$ of value $c$, and if $c>0$, it broadcasts message
$(\mu(\varphi^{(k)}(T)(x,\overrightarrow{y})),c-1)$ to all its neighbors.

\noindent \textbf{Topology collection}
When the clock $\sigma$
expires, each node $a$ sets a clock $\sigma^\prime$ of value $4k$
and starts collecting all the topology information in its
$2k$-neighborhoods by sending messages and tracing the traversed
ports (like for Theorem~\ref{the-local-fo}). Now each
node $a$ gets a $2k$-local name for each
$a^\prime$ in its $k$-neighborhood, which is the set of traces from
$a$ to $a^\prime$ of length no more than $2k$, denote
this $2k$-local name of $a^\prime$ at $a$ by $Name^{2k}_{a}(a^\prime)$.

\noindent \textbf{Fixpoint Computation}
In each node $a$, there is a
local table to store the tuples $(a, \overrightarrow{b})$ in $T$, which uses the
$k$-local names $Name^k_{a}(a^\prime)$ of $a^\prime$.

When the clock $\sigma^\prime$ expires, each node $a$ sets a clock
$\tau=3k$ and starts evaluating the FO formula
$\varphi^{(k)}(T)(a,\overrightarrow{y})$ (where $x$ is instantiated by
$Name^{2k}_{a}(a)$, the $2k$-local name of $a$ at $a$).  Node
$a$ evaluates $\varphi^{(k)}(T)(a,\overrightarrow{y})$ by instantiating all the
(free or bounded) variables in $\varphi^{(k)}(T)(a, \overrightarrow{y})$ by its
$2k$-local names $Name^{2k}_{a}(a^\prime)$ for nodes in its
$k$-neighborhood and considering all the possible instantiations one
by one.

Suppose $a$ instantiates $(x, \overrightarrow{y})$ by $(a,\overrightarrow{b})$ and also
instantiates all the bounded variables, then a variable-free formula
$\psi$ is obtained. Since there may be atomic formulae $T(a^\prime,
\overrightarrow{b^\prime})$, $a$ should
send the query $?BT(a^\prime, \overrightarrow{b^\prime})$ to $a^\prime$,
then $a^\prime$ should check whether $T(a^\prime,
\overrightarrow{b^\prime})$ holds or not and send the answer to $a$. It works since from $Name^{2k}_{a}(b^\prime_i)$,
the $2k$-local names of $b^\prime_i$ at $a$, $a^\prime$ can get
$Name^{k}_{a^\prime}(b^\prime_i)$, the $k$-local names of $b^\prime_i$ at $a^\prime$.

During the above evaluation of $\varphi^{(k)}(T)(a,\overrightarrow{y})$, if a
new tuple $(a,\overrightarrow{b})$ satisfying $\varphi^{(k)}(T)(x, \overrightarrow{y})$ is
obtained, $a$ stores it in a temporary buffer (the local table
for $T$ will be updated later) by using the $k$-local names of $a$ and $\overrightarrow{b}$ at $a$, and sends messages to inform other nodes in
its $k$-neighborhood that new facts are produced.

For each node $a$, when the clock $\tau$ expires, it sets the value
of $\tau$ by $3k$ again; if some new tuples are produced, $a$
updates the local table for $T$, and empty the temporary buffer; if some new
tuples are produced or some informing messages are received, it
evaluates $\varphi^{(k)}(T)( a, \overrightarrow{y})$ again. \qed
\medskip


\begin{theorem}\label{the-local-FP}
Let ${\bf G}=(V,G)$ be a network with $n$ nodes and diameter $\Delta$.
\lfp\/ formulae $\mu(\varphi^{(k)}(T)(x, \overrightarrow{y}))$ can be
evaluated on $G$ with the following complexity upper bounds

\medskip
\begin{tabular}{cccc}
\hs IN-TIME/ROUND   &  DIST-TIME    & MSG-SIZE &  $\#$MSG/NODE \\
$O(1)$ & $O(n)$ & $O(1)$ & $O(1)$
\end{tabular}
\end{theorem}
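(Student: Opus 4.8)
The plan is to establish two facts about the query engine $\mathcal{QE}_{FP_{loc}}$: that it computes the inflationary fixpoint $I$ of $\varphi^{(k)}(T)$, and that it meets the four tabulated bounds. Correctness will come from reducing $\mu$ to iterated calls of the $FO_{loc}$ engine and invoking Theorem~\ref{the-local-fo}; the complexity will come from the fact that, in a degree-bounded graph, every $2k$-neighborhood has size bounded by a constant depending only on $D$ and $k$.

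For correctness, I would first check the \textbf{Request flooding} phase: a node at distance $d$ from the requesting node receives the message for the first time at round $d$ along a shortest path, with hop counter $\Delta-d\ge 0$, sets $\sigma$ to that value, and so its clock expires at round $d+(\Delta-d)=\Delta$; later arrivals with smaller counters are ignored because $\sigma$ has already been set, hence every node enters the \textbf{Topology collection} phase at the same round. The collection phase is the procedure already analysed in the proof of Theorem~\ref{the-local-fo}, run for radius $2k$: because the traversed-port traces are carried in every message, the concurrent collections of the different nodes do not interfere, and when $\sigma'=4k$ expires each node $a$ knows the isomorphism type of ${\cal N}^{2k}(a)$ and, for each $a'\in{\cal N}^k(a)$, the $2k$-local name $Name^{2k}_a(a')$ together with the translation into $a'$'s own $k$-local name. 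I would then prove by induction on the stage $i$ the invariant that at the $i$-th (globally synchronised) expiry of $\tau=3k$ the union of the local $T$-tables equals the inflationary stage $I_i$ of $\mu(\varphi^{(k)}(T))$. The base case $I_0=\emptyset$ is immediate. For the inductive step, one iteration of the engine is exactly one round of $\mathcal{QE}_{FO_{loc}}$ evaluating $\varphi^{(k)}(T)(a,\overrightarrow{y})$ with $T$ read off the stage-$i$ tables: node $a$ enumerates the $O(1)$ instantiations of $(x,\overrightarrow{y})$ and of the bounded variables by its local names, and whenever the resulting variable-free formula mentions an atom $T(a',\overrightarrow{b'})$ it asks $a'$, which answers from its already-committed local table using $Name^k_{a'}$; by Theorem~\ref{the-local-fo} the tuples $a$ then derives are exactly those of $\varphi^{(k)}(I_i)$ held at $a$, so the updated tables hold $\varphi^{(k)}(I_i)\cup I_i=I_{i+1}$. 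The informing messages ensure that every node whose $k$-neighborhood gained a tuple in stage $i$ re-evaluates in stage $i+1$, so the engine stops exactly when $I_i=I_{i+1}$, i.e. at $I$.

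For the complexity bounds: since $|{\cal N}^{2k}(a)|$ is bounded by a constant, every local name is a sequence of at most $2k+2$ port numbers (each $\le D$) and every message carries only such names and bounded port traces, so MSG-SIZE $=O(1)$; IN-TIME/ROUND $=O(1)$ because evaluating $\varphi^{(k)}(T)(a,\overrightarrow{y})$ quantifies only over the $O(1)$ nodes of ${\cal N}^k(a)$ and the tables have $O(1)$ size. For DIST-TIME, flooding costs $O(\Delta)$ and topology collection $O(k)$ rounds, each later iteration lasts $3k=O(1)$ rounds --- chosen so that the $\le 2k$-round round-trip of the $T$-queries and the $\le k$-round propagation of the informing messages inside a $k$-neighborhood both fit within one iteration --- and since every tuple $(a,\overrightarrow{b})\in T$ has $\overrightarrow{b}\in{\cal N}^k(a)^{\ell}$ there are at most $n\cdot|{\cal N}^k(a)|^{\ell}=O(n)$ tuples in all, hence at most $O(n)$ iterations produce a new tuple and the engine halts immediately after the first iteration that produces none; with $\Delta\le n$ this gives DIST-TIME $=O(n)$. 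Finally $\#$MSG/NODE $=O(1)$: a node sends messages only during topology collection ($O(1)$ messages), when it re-evaluates $\varphi^{(k)}(T)$, or when it answers a $T$-query; it re-evaluates only after a new tuple appears somewhere in its $k$-neighborhood, which happens $O(1)$ times in total, each re-evaluation emits $O(1)$ $T$-queries and $O(1)$ informing messages, and it receives only $O(1)$ $T$-queries overall since only the $O(1)$ nodes of ${\cal N}^k(a)$ query it and each does so $O(1)$ times.

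The step I expect to be the main obstacle is the bookkeeping around stage synchronisation when a node falls idle and is later woken by an informing message: one must show that the globally aligned clocks keep the iteration boundaries in step even for idle nodes, that at the start of iteration $i+1$ every node has committed the stage-$i$ tuples of its entire $k$-neighborhood before any $T$-query of iteration $i+1$ is answered, and that no informing message of iteration $i$ spills into iteration $i+1$. This is precisely what fixes the value $3k$ for $\tau$ and what is responsible for the per-node message count being $O(1)$ rather than $O(n)$. A secondary point is the asynchronous case, handled as for $\mathcal{QE}_{FP}$ by replacing the clocks with a BFS spanning tree rooted at the requesting node, which leaves all four bounds unchanged.
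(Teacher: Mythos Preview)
Your proof follows essentially the same approach as the paper: the synchronised flooding, the $2k$-neighborhood topology collection, and the per-iteration analysis are identical, and your key complexity arguments (each iteration produces at least one new tuple somewhere so there are $O(n)$ iterations; a node re-evaluates only when a new tuple appears in its $k$-neighborhood, which happens $O(1)$ times in total) are exactly the ones the paper uses. Your added inductive correctness argument is sound and more explicit than what the paper provides.

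There is, however, a genuine error in your final remark on the asynchronous case: replacing the clocks by a spanning tree does \emph{not} leave all four bounds unchanged. The paper states (immediately after the theorem) that in the asynchronous setting DIST-TIME becomes $O(n^2)$ and $\#$MSG/NODE becomes $O(n)$. The reason is that without the global clocks the stage boundaries are coordinated via convergecast/broadcast over the spanning tree, so each of the $O(n)$ iterations now costs $O(n)$ rounds and each node participates in $O(n)$ tree traversals, forwarding $O(1)$ messages per traversal. Your claim that the BFS-tree substitution is cost-free should be corrected.
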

\begin{proof}
It is easy to see that messages sent during the computation of
$\mathcal{QE}_{FP_{loc}}$ are of size $O(1)$.

Before the clock $\sigma$ expires, it is evident that each node
sends only $O(1)$ messages of the format
$(\mu(\varphi(T)(x,\overrightarrow{y})), c)$.

Then each node sets the clock $\sigma^\prime$ and collects topology
information of its $2k$-neighborhood, since the degree of nodes is
bounded and in the $2k$-neighborhood of
$a$ there are only $O(1)$ nodes, each node sends only $O(1)$ messages as well.

After the clock $\sigma^\prime$ expires, each node $a$ sets the
clock $\tau$ and starts evaluating $\varphi^{(k)}(T)(a, \overrightarrow{y})$.
During each period $3k$ of $\tau$, node $a$ considers all the
possible instantiations of the (free or bounded) variables in
$\varphi^{(k)}(T)(a, \overrightarrow{y})$ one by one and evaluate the
instantiated formula. During each such period, since the total
number of different instantiations are $O(1)$ and only $O(1)$
messages are sent during the evaluation of each such instantiated
formula $\varphi^{(k)}(T)(a, \overrightarrow{b})$, the total number of messages
sent by $a$ is $O(1)$.

Moreover, after the clock $\sigma^\prime$ expires and before the
distributed computation terminates, each node $a$ only sends $O(1)$
messages: $a$ only be able to receive informing messages from nodes
in its $k$-neighborhood, the total number of tuples $(a,\overrightarrow{b})$
produced on nodes in the $k$-neighborhood of $a$ is $O(1)$, so the
total number of informing messages received by $a$ is $O(1)$,
consequently $a$ evaluates $\varphi^{(k)}(x,\overrightarrow{y})$ at most
$O(1)$ times, thus the total number of messages sent by $a$ is
$O(1)$.

After the clock $\sigma^\prime$ expires, during each period $3k$ of
clock $\tau$, there should be at least one informing message sent by
some node, which means at least one new tuple in $T$ is produced.
Since there are at most $O(n)$ number of tuples in $T$, the total
distributed time for the evaluation of
$\mu(\varphi^{(k)}(T)(x,\overrightarrow{y}))$ is $O(n)$.
\end{proof}

For asynchronous systems, a spanning tree rooted at the requesting
node can be used to evaluate $FP_{loc}$, and the
complexity bounds $\textrm{DIST-TIME}$ and $\#\textrm{MSG/NODE}$
become respectively $O(n^2)$ and $O(n)$.

\subsection{Networks with no global identifiers}


The query engines $\mathcal{QE}_{FO_{loc}}$ and
$\mathcal{QE}_{FP_{loc}}$ evaluate \lfo\/ and \lfp\/ queries by
using only local names in the bounded neighborhoods of nodes, which
suggests that for the evaluation of the local fragments of FO and
FP, unique global identifiers for nodes are unnecessary. In this section,
we show that this is essentially the case, and consider their
evaluation on networks with identifiers which are only locally
consistent and on anonymous networks with ports.


\begin{definition}
A network ${\bf G}=(V,G, L)$ with a labeling function $L: V \rightarrow C$
assigning identifiers to nodes, is \textbf{$k$-locally consistent}
if for each node $a \in V$, for any $b_1,b_2 \in \mathcal{N}^k(a)$,
$L(b_1) \neq L(b_2)$.
\end{definition}


Ports have been used to construct local names in the previous
sub-section. They are not needed to evaluate \lfo\/ and \lfp\/ on
locally-consistent networks since these networks have locally unique
identifiers for nodes.

\begin{theorem}\label{the-local-fo-local-con-anony}
A \lfo\/ formula $\varphi^{(k)}(x, \overrightarrow{y})$ can be evaluated on
$k$-locally consistent networks with the following complexity upper bounds:

\medskip
\begin{tabular}{cccc}
\hs IN-TIME/ROUND   &  DIST-TIME    & MSG-SIZE &  $\#$MSG/NODE \\
  $O(1)$ & $O(\Delta)$ & $O(1)$ & $O(1)$
\end{tabular}
\end{theorem}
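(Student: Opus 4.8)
The plan is to reuse the query engine $\mathcal{QE}_{FO_{loc}}$ of Theorem~\ref{the-local-fo} almost verbatim, making a single change: since a $k$-locally consistent network already has locally unique node identifiers, ports are no longer needed, and everywhere the anonymous engine records a trace of traversed ports, the new engine will instead record a sequence of node identifiers. Concretely, I would first have the requesting node flood $\varphi^{(k)}(x,\overrightarrow{y})$, reaching every node in $O(\Delta)$ distributed time. Then each node $a$ launches a bounded-depth exploration of its neighborhood, flooding messages $(\textrm{``collect''},i,L(a)\ell_1\cdots\ell_r)$ in which the hop counter $i$ is decremented at each step and every forwarding node appends its own identifier, with replies travelling back along the recorded identifier sequences, mirroring the collect/reply phase of Theorem~\ref{the-local-fo}. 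Within $O(k)=O(1)$ distributed time this gives $a$ all identifier-paths of bounded length, from which, by the same reconstruction as in Theorem~\ref{the-local-fo}, it recovers the isomorphism type of its $k$-neighborhood, with the members of $\mathcal{N}^k(a)$ named by their locally unique identifiers. Node $a$ then evaluates $\varphi^{(k)}(a,\overrightarrow{y})$ by in-node computation, letting $\overrightarrow{y}$ range over the identifiers in $\mathcal{N}^k(a)$, and stores each satisfying tuple $(a,\overrightarrow{b})$ locally, using local names since no global identifiers are available.

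The point I expect to require the most care, and the main obstacle, is the interference between the explorations launched concurrently by different nodes. The argument I would use is: all messages stemming from $a$'s exploration remain inside $\mathcal{N}^k(a)$, so whenever some node $c$ must handle messages from the explorations of both $a$ and $b$, we have $a,b\in\mathcal{N}^k(c)$ and hence $L(a)\neq L(b)$ by $k$-local consistency; therefore tagging every message with the identifier of its originating node (the leading component $L(a)$ above) separates concurrent explorations and, in particular, lets every reply be routed unambiguously back to its source. This is exactly the role played by the trace lists in the anonymous engine of Theorem~\ref{the-local-fo}, and the delicate part will be to check that this tagging really is sufficient in an arbitrary $k$-locally consistent network, including for reply routing along identifier sequences. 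Soundness and completeness would then follow, as for Theorem~\ref{the-local-fo}, from the locality of $\varphi^{(k)}$ (the truth of $\varphi^{(k)}(a,\overrightarrow{b})$ depends only on the induced subgraph on $\mathcal{N}^k(a)$, which $a$ reconstructs correctly) together with the isomorphism-invariance of FO queries.

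For the complexity bounds I would argue as follows. The degree bound $D$ gives $|\mathcal{N}^k(a)|\le D+D^2+\cdots+D^k=O(1)$, so every exploration touches only $O(1)$ nodes and the messages carry identifier sequences of $O(1)$ length over the boundedly many identifiers occurring in the relevant neighborhoods; hence MSG-SIZE is $O(1)$ independently of $n$, and $\#$MSG/NODE is $O(1)$ because each node originates one exploration and forwards or answers only $O(1)$ others. The topology reconstruction and the evaluation of $\varphi^{(k)}(a,\overrightarrow{y})$ over $O(1)$ instantiations of $O(1)$-size data run in $O(1)$ in-node time per round, and DIST-TIME is $O(\Delta)$ for the initial flooding plus $O(1)$ for the exploration and evaluation phases. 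The rest of the verification is a routine transcription of the proof of Theorem~\ref{the-local-fo}.
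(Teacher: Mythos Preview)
Your proposal is correct and follows the paper's approach exactly: the paper offers no detailed proof of this theorem, only the one-sentence remark preceding it that ports are unnecessary on $k$-locally consistent networks because the locally unique node labels can serve directly as local names in the engine $\mathcal{QE}_{FO_{loc}}$ of Theorem~\ref{the-local-fo}. Your write-up is in fact considerably more detailed than what the paper provides, and the point you flag as delicate (separating concurrent explorations via the originator's label) is precisely the role the paper's port traces played and is handled by $k$-local consistency just as you say.
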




\begin{theorem}\label{the-local-fp-local-con-anony}
A \lfp\/ formula $\mu(\varphi^{(k)}(T)(x, \overrightarrow{y}))$ can be
evaluated on $k$-locally consistent networks with the following complexity upper bounds:

\medskip
\begin{tabular}{cccc}
\hs IN-TIME/ROUND   &  DIST-TIME    & MSG-SIZE &  $\#$MSG/NODE \\
  $O(1)$ & $O(n)$ & $O(1)$ & $O(1)$
\end{tabular}
\end{theorem}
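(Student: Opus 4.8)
The plan is to combine the stage-coordination skeleton of the query engine $\mathcal{QE}_{FP_{loc}}$ used for Theorem~\ref{the-local-FP} with the identifier-based local-topology collection underlying Theorem~\ref{the-local-fo-local-con-anony}. Since the network is $k$-locally consistent, no ports are needed: inside any $k$-ball the labels are already pairwise distinct, so the genuine identifiers play the role that the equivalence classes of port-traces played in $\mathcal{QE}_{FP_{loc}}$. Concretely, the requesting node floods the formula together with a hop counter so that after $O(\Delta)$ rounds every node has set its synchronization clock, at a cost of $O(1)$ messages of $O(1)$ size per node. When the clock expires each node $a$ collects, by bounded-radius flooding, the labelled adjacency structure of a fixed-radius neighborhood of $a$; because the degree is bounded by $D$, this neighborhood has $O(1)$ nodes and the collection costs $O(1)$ messages of $O(1)$ size. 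The fixpoint is then iterated exactly as in $\mathcal{QE}_{FP_{loc}}$: in each stage $a$ runs through the $O(1)$ instantiations of the free and bounded variables of $\varphi^{(k)}(T)(a,\overrightarrow{y})$ by labels of $\mathcal{N}^k(a)$, evaluates the resulting quantifier-free formula, answering each subquery $?BT(b,\overrightarrow{b'})$ by forwarding the labels to $b$ (reachable since $b\in\mathcal{N}^k(a)$) and letting $b$ look the tuple up in its local table; new tuples are cached in a buffer and an informing message is pushed through the $k$-neighborhood; a node replays its stage whenever it produces a new tuple or receives an informing message, and otherwise halts.

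Granting that this engine is correct, the complexity bounds drop out by the same counting as in Theorem~\ref{the-local-FP}: MSG-SIZE is $O(1)$ (each message carries the fixed formula, a bounded hop counter, and a constant-length sequence of labels); IN-TIME/ROUND is $O(1)$ (label look-ups in a constant-size table, enumeration of the constantly many instantiations, evaluation of a fixed quantifier-free formula); $\#$MSG/NODE is $O(1)$ because each node participates in only $O(1)$ stages --- it can be woken only by the $O(1)$ tuples $(a,\overrightarrow{b})$ producible on nodes of its $k$-neighborhood --- and each stage emits $O(1)$ subquery, answer and informing messages; and DIST-TIME is $O(n)$ because, after the $O(\Delta)=O(n)$ flooding phase, every constant-length stage in which the global computation has not yet stabilized produces at least one new tuple of $T$, and the total number of tuples is $O(n)$ since each node stores only the $O(1)$ tuples $(a,\overrightarrow{b})$ with $\overrightarrow{b}\in\mathcal{N}^k(a)$. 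Soundness and completeness --- that the distributed computation realizes the stages $I_i$ of the centralized inflationary fixpoint of $\mu(\varphi^{(k)}(T))$ --- then follow as in Theorems~\ref{the-theo-fp} and~\ref{the-local-FP}.

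The crux, and the step I expect to be the main obstacle, is precisely the correctness of the label-based name resolution under \emph{only} $k$-local consistency. When $a$ forwards a subquery $?BT(b,\overrightarrow{b'})$, node $b$ must recover the tuple of nodes that $a$ had in mind; yet $b$ and the arguments $\overrightarrow{b'}$ lie in $\mathcal{N}^k(a)\subseteq\mathcal{N}^{2k}(b)$, a region over which $k$-local consistency by itself need not force distinct labels. The argument must therefore show that the relativization of $\varphi^{(k)}$ to $k$-neighborhoods confines the computation to exactly the part of $T$ where resolution is unambiguous: any atom $T(b,\overrightarrow{b'})$ whose truth value is not simply ``$b$ has no tuple with these labels'' has $\overrightarrow{b'}\in\mathcal{N}^k(b)$, and for such atoms the label-to-node map is injective on $\mathcal{N}^k(a)\cup\mathcal{N}^k(b)$; one then has to verify that $b$ can \emph{detect} the complementary case (when some forwarded label does not name a node of $\mathcal{N}^k(b)$) and safely answer ``false'', so that, stage by stage, the distributed table of each node coincides with the restriction of $I_i$ to its $k$-neighborhood. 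Making this locality argument precise --- and, if $k$-consistency turns out to be just short of sufficient for the radius actually exercised by the protocol, recording the exact radius needed --- is where the real work lies; everything else is a transcription of the proofs of Theorems~\ref{the-local-FP} and~\ref{the-local-fo-local-con-anony}.
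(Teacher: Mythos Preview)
The paper gives no proof of this theorem at all: it is stated immediately after Theorem~\ref{the-local-fo-local-con-anony}, and the only justification is the one-line remark that ports ``are not needed to evaluate \lfo\/ and \lfp\/ on locally-consistent networks since these networks have locally unique identifiers for nodes.'' Your proposal follows precisely this implicit line --- transplant $\mathcal{QE}_{FP_{loc}}$, replacing port-trace names by labels --- and your complexity accounting (flooding, bounded-radius collection, $O(1)$ stages per node, $O(n)$ total tuples) reproduces exactly the argument of Theorem~\ref{the-local-FP}. So as far as matching the paper goes, you are on target.

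Where you go beyond the paper is in the last paragraph, and there you have put your finger on a real issue that the paper simply does not address. In $\mathcal{QE}_{FP_{loc}}$ the $2k$-neighborhood collection is what lets a node $a'$ translate $a$'s names into its own; with labels under mere $k$-local consistency that translation is \emph{not} obviously sound, for the reason you give: a label from $\mathcal{N}^k(a)$ can collide with a different node in $\mathcal{N}^k(a')$, and then $a'$ cannot tell whether $a$ is asking about a node outside $\mathcal{N}^k(a')$ (answer ``false'') or about the colliding node inside it. Your proposed repair --- exploit that any tuple in $T$ with first coordinate $b$ has its remaining coordinates in $\mathcal{N}^k(b)$, and have $b$ detect and reject out-of-range arguments --- is the right idea, but detecting ``out of range'' from the label alone is exactly what $k$-consistency fails to guarantee. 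The paper's statement, taken literally with the same $k$ in the formula and in the consistency hypothesis, is therefore underspecified; a clean proof seems to require either $2k$-local consistency or that $a$ transmit enough structural information (e.g.\ a short path inside $\mathcal{N}^k(a)$ from $a'$ to each argument) for $a'$ to disambiguate. You are right that this is where the real work lies, and the paper does not do it.
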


Local fragments of FO and FP can also be evaluated with the same
complexity bounds on anonymous networks with ports since local names
can be obtained by tracing the traversed ports of messages.

Note that in general, FO and FP queries cannot be
evaluated over locally consistent or anonymous networks.



\section{Conclusion}\label{sec-conclusion}

Fixpoint logic expresses at a global level and in a declarative way the interesting functionalities of distributed systems. We have proved that fixpoint formulae over graphs admit reasonable distributed complexity upper-bounds.

Moreover, we showed how global formulae can be translated into rule programs
describing the behavior of the nodes of the network and computing the same result.
The examples given in the paper have been implemented on the Netquest system which supports the Netlog language.
Finally, we proved the potential of restricted fragments of fixpoint logic to local neighborhood, that are still very expressive, but admit much tighter distributed complexity upper-bounds with bounded number of messages of bounded size, independent of the size of the network.

These results show how classical logical formalisms can help designing high level programming abstractions for distributed systems that allows to state the desired global result, without specifying its computation mode. We plan to pursue this investigation in the following directions. (i) Investigate the distributed complexity of other logical formalisms such as monadic Second Order Logic, which is very expressive on graphs. (ii) Study the optimization of the translation from fixpoint logic to Netlog, to obtain efficient programs. (iii) Extend these results to other distributed computing models.

\section{Acknowledgments}
The authors  thank Huimin Lin for fruitful discussions.


\bibliographystyle{abbrv}
\bibliography{bibliolocal}

\end{document}